\newcommand{\axiom}[0]{\ensuremath{\alpha}}
\newcommand{\axiomatisation}[1]{\ensuremath{\axiom(#1)}}
\newcommand{\sh}[1]{\ensuremath{\texttt{sh}{:}{\texttt{#1}}}}
\newcommand{\rdf}[1]{\ensuremath{\texttt{rdf}{:}{\texttt{#1}}}}
\newcommand{\trip}[3]{\ensuremath{\langle #1,\allowbreak #2,\allowbreak #3 \rangle}}
\newcommand{\pair}[2]{\ensuremath{\text{:}\langle #1,\allowbreak #2 \rangle}}
\newcommand{\gramdef}[0]{\ensuremath{:=}}
\newcommand{\grameq}[0]{\ensuremath{\doteq}}
\newcommand{\vn}[0]{\ensuremath{n}}
\newcommand{\vx}[0]{\ensuremath{x}}
\newcommand{\vy}[0]{\ensuremath{y}}
\newcommand{\vz}[0]{\ensuremath{z}}
\newcommand{\conc}[0]{\ensuremath{\texttt{c}}}
\newcommand{\structureOmegaMain}[0]{\ensuremath{I}}
\newcommand{\structureOmegaOne}[0]{\ensuremath{\Omega_{G,\vShapeDocument_1}}}
\newcommand{\structureOmegaTwo}[0]{\ensuremath{\Omega_{G,\vShapeDocument_2}}}
\newcommand{\structureOmegaG}[0]{\ensuremath{\Omega_{G}}}
\newcommand{\structureOmegaS}[0]{\ensuremath{\Omega_{G,\vShapeDocument}}}
\newcommand{\transeq}[0]{\ensuremath{\; \doteq \;}}
\newcommand{\tr}[0]{\ensuremath{r}}
\newcommand{\constantList}[0]{\ensuremath{C}}
\newcommand{\vShapeDocument}[0]{\ensuremath{M}}
\newcommand{\vShapes}[0]{\ensuremath{\bar{S}}}
\newcommand{\vshape}[0]{\texttt{s}}
\newcommand{\vshapet}[0]{\ensuremath{t}}
\newcommand{\vshapec}[0]{\ensuremath{d}}
\newcommand{\vs}[0]{\ensuremath{s}}
\newcommand{\namedshapedefinition}[0]{referenced shape definition}
\newcommand{\namedshape}[0]{referenced shape}
\newcommand{\sconst}{\texttt{s}}
\newcommand{\lO}{\texttt{O}}
\renewcommand{\S}{\texttt{S}}
\newcommand{\Z}{\texttt{Z}}
\newcommand{\A}{\texttt{A}}
\newcommand{\T}{\texttt{T}}
\newcommand{\D}{\texttt{D}}
\newcommand{\E}{\texttt{E}}
\renewcommand{\O}{\texttt{O}}
\newcommand{\C}{\texttt{C}}
\newcommand{\X}{$\varnothing$}
\newcommand{\hasshape}[2]{\ensuremath{\texttt{hasShape}(#1,#2)}\xspace}
\newcommand{\hasshapePredicate}[0]{\ensuremath{\texttt{hasShape}}\xspace}
\newcommand{\SCL}[0]{\ensuremath{\texttt{\textbf{SCL}}}}
\newcommand{\Diagonal}[0]{\ensuremath{D}}
\newcommand{\taus}[2]{\ensuremath{\tau_{#1}(#2)}}
\newcommand{\isA}[0]{\texttt{isA}\xspace}
\newcommand{\isIRI}[0]{\text{IRI}}
\newcommand{\isLiteral}[0]{\text{literal}}
\newcommand{\isBlank}[0]{\text{blank}}
\newcommand{\hasdatatype}[0]{\text{dt}}
\definecolor{bluekeywords}{rgb}{0.13, 0.13, 1}
\definecolor{greencomments}{rgb}{0, 0.5, 0}
\definecolor{redstrings}{rgb}{0.9, 0, 0}
\definecolor{graynumbers}{rgb}{0.5, 0.5, 0.5}
  \newcommand{\TileSet}[0]{\ensuremath{T}}
  \newcommand{\HorizontalRel}[0]{\ensuremath{\mathsf{H}}}
  \newcommand{\VerticalRel}[0]{\ensuremath{\mathsf{V}}}
  \newcommand{\tile}[0]{\ensuremath{t}}
  \newcommand{\tileprime}[0]{\ensuremath{t^{\prime}}}
  \newcommand{\Horizontal}[0]{\ensuremath{H}}
  \newcommand{\Vertical}[0]{\ensuremath{V}}
  \newcommand{\DiagonalX}[1]{\ensuremath{D_{#1}}}
  \newcommand{\Closure}[1]{\ensuremath{E_{#1}}}
\theoremstyle{remark}
\newtheorem{defn}{Definition}
\tikzstyle{every node} =
\tikzstyle{unk} =
\tikzstyle{dec} =
\tikzstyle{decthm} =
\tikzstyle{und} =
\tikzstyle{undthm} =
\tikzstyle{fmp} =
\tikzstyle{notfmp} =
\tikzstyle{every edge} +=
\tikzstyle{decder} =
\tikzstyle{decimp} =
\tikzstyle{undder} =https://www.overleaf.com/project/5e42733697cde1000189d483
\renewcommand{\S}{\texttt{S}}
\renewcommand{\O}{\texttt{O}}
  \newcommand{\figfrg}
    {
    \begin{center}
    \footnotesize
    \scalebox{0.85}[0.76]
    {
    \begin{tikzpicture}
      [node distance = 4em, bend angle = 22.5]

      \node [decthm, fmp]
            (0)
            []
            {\X};

      \node [dec, fmp]
            (A)
            [above of = 0, yshift = 1em]
            {\A};
      \node [unk, notfmp]
            (O)
            [left of = A, xshift = -4em]
            {\O};
      \node [dec, fmp]
            (S)
            [left of = O, xshift = -4em]
            {\S};
      \node [dec, notfmp]
            (C)
            [right of = A, xshift = 4em]
            {\C};
      \node [dec, fmp]
            (E)
            [right of = C, xshift = 4em]
            {\E};

      \node [unk]
            (SE)
            [above of = A, yshift = 1.5em]
            {\S\,\E};
      \node [unk, notfmp]
            (AO)
            [left of = SE]
            {\A\,\O};
      \node [unk, notfmp]
            (SC)
            [left of = AO]
            {\S\,\C};
      \node [undthm, notfmp]
            (SO)
            [left of = SC]
            {\S\,\O};
      \node [dec, fmp]
            (SA)
            [left of = SO]
            {\S\,\A};
      \node [dec, notfmp]
            (AC)
            [right of = SE]
            {\A\,\C};
      \node [unk, notfmp]
            (EO)
            [right of = AC]
            {\E\,\O};
      \node [dec, notfmp]
            (EC)
            [right of = EO]
            {\E\,\C};
      \node [dec, fmp]
            (AE)
            [right of = EC]
            {\A\,\E};

      \node [unk]
            (SAE)
            [above of = SE, yshift = 1.5em]
            {\S\,\A\,\E};
      \node [und, notfmp]
            (SEO)
            [left of = SAE, xshift = -1.5em]
            {\S\,\E\,\O};
      \node [undthm, notfmp]
            (SAC)
            [left of = SEO]
            {\S\,\A\,\C};
      \node [und, notfmp]
            (SAO)
            [left of = SAC]
            {\S\,\A\,\O};
      \node [undthm, notfmp]
            (SEC)
            [right of = SAE, xshift = 1.5em]
            {\S\,\E\,\C};
      \node [unk, notfmp]
            (AEO)
            [right of = SEC]
            {\A\,\E\,\O};
      \node [dec, notfmp]
            (AEC)
            [right of = AEO]
            {\A\,\E\,\C};

      \node [undthm, notfmp]
            (SZAE)
            [above of = SAE, xshift = -2.5em]
            {\S\,\Z\,\A\,\E};
      \node [decthm, fmp]
            (SZAD)
            [above of = SA, yshift = 5.5em, xshift = -1.5em]
            {\S\,\Z\,\A\,\D};
      \node [unk]
            (SADE)
            [above of = SAE, xshift = 2.5em]
            {\S\,\A\,\D\,\E};
      \node [decthm, fmp]
            (ZADE)
            [above of = AE, yshift = 5.5em, xshift = 1.5em]
            {\Z\,\A\,\D\,\E};

      \node [und, notfmp]
            (SZADE)
            [node distance = 9em, above of = SAE]
            {\S\,\Z\,\A\,\D\,\E};
      \node [und, notfmp]
            (SAEO)
            [left of = SZADE, xshift = -2em]
            {\S\,\A\,\E\,\O};
      \node [und, notfmp]
            (SZADC)
            [left of = SAEO, xshift = -1em]
            {\S\,\Z\,\A\,\D\,\C};
      \node [und, notfmp]
            (SZADO)
            [left of = SZADC, xshift = -1em]
            {\S\,\Z\,\A\,\D\,\O};
      \node [und, notfmp]
            (SAEC)
            [right of = SZADE, xshift = 2em]
            {\S\,\A\,\E\,\C};
      \node [unk, notfmp]
            (ZADEO)
            [right of = SAEC, xshift = 1em]
            {\Z\,\A\,\D\,\E\,\O};
      \node [decthm, notfmp]
            (ZADEC)
            [right of = ZADEO, xshift = 1em]
            {\Z\,\A\,\D\,\E\,\C};

      \node [und, notfmp]
            (SZATE)
            [above of = SZADE, xshift = -2.5em]
            {\S\,\Z\,\A\,\T\,\E};
      \node [decthm, notfmp]
            (SZATD)
            [above of = SZAD, yshift = 5em]
            {\S\,\Z\,\A\,\T\,\D};
      \node [unk]
            (SATDE)
            [above of = SZADE, xshift = 2.5em]
            {\S\,\A\,\T\,\D\,\E};
      \node [unk]
            (ZATDE)
            [above of = ZADE, yshift = 5em]
            {\Z\,\A\,\T\,\D\,\E};

      \node [und, notfmp]
            (SZATDE)
            [node distance = 8em, above of = SZADE]
            {\S\,\Z\,\A\,\T\,\D\,\E};
      \node [und, notfmp]
            (SZATDO)
            [above of = SZATD]
            {\S\,\Z\,\A\,\T\,\D\,\O};
      \node [und, notfmp]
            (SZATDC)
            [node distance = 6em, right of = SZATDO]
            {\S\,\Z\,\A\,\T\,\D\,\C};
      \node [unk, notfmp]
            (ZATDEC)
            [above of = ZATDE]
            {\Z\,\A\,\T\,\D\,\E\,\C};
      \node [unk, notfmp]
            (ZATDEO)
            [node distance = 6em, left of = ZATDEC]
            {\Z\,\A\,\T\,\D\,\E\,\O};

      \node [und, notfmp]
            (SZATDEO)
            [above of = SAEO, yshift = 8em]
            {\S\,\Z\,\A\,\T\,\D\,\E\,\O};
      \node [und, notfmp]
            (SZATDOC)
            [above of = SZATDO, xshift = 3em]
            {\S\,\Z\,\A\,\T\,\D\,\O\,\C};
      \node [und, notfmp]
            (SZATDEC)
            [above of = SAEC, yshift = 8em]
            {\S\,\Z\,\A\,\T\,\D\,\E\,\C};
      \node [unk, notfmp]
            (ZATDEOC)
            [above of = ZATDEC, xshift = -3em]
            {\Z\,\A\,\T\,\D\,\E\,\O\,\C};

      \path[Latex-Latex]
        (0)     edge  [decder]
                      (A)
        (0.west)
                edge  [decder]
                      (S)
        ;
      \path[-Latex]
        (0)     edge  []
                      (O)
                edge  [decimp]
                      (C)
        (0.east)
                edge  [decimp]
                      (E)
        ;

      \path[Latex-Latex]
        (A)     edge  [decder]
                      (SA.south)
        (O.north)
                edge  []
                      (AO.south)
        (S.north)
                edge  [decder]
                      (SA.south)
        ;
      \path[-Latex]
        (A.north)
                edge  []
                      (AO.south)
                edge  [decimp]
                      (AC.south)
        (A)     edge  [decimp]
                      (AE.south)
        (O.north)
                edge  []
                      (SO.south)
                edge  []
                      (EO.south)
        (S.north)
                edge  []
                      (SE.south)
                edge  []
                      (SC.south)
                edge  []
                      (SO.south)
                edge  [decder]
                      (SA.south)
        (C.north)
                edge  []
                      (SC.south)
                edge  [decder]
                      (AC.south)
                edge  [decder]
                      (EC.south)
        (E.north)
                edge  []
                      (SE.south)
                edge  []
                      (EO.south)
                edge  [decimp]
                      (EC.south)
                edge  [decder]
                      (AE.south)
        ;

      \path[-Latex]
        (SE.north)
                edge  []
                      (SAE.south)
                edge  []
                      (SEC.south)
                edge  []
                      (SEO.south)
        (AO.north)
                edge  []
                      (SAO.south)
                edge  []
                      (AEO.south)
        (SC.north)
                edge  []
                      (SAC.south)
                edge  []
                      (SEC.south west)
        (SO.north)
                edge  [undder]
                      (SAO.south)
                edge  [undder]
                      (SEO.south)
        (SA.north)
                edge  []
                      (SAE.south)
                edge  []
                      (SAC.south)
                edge  []
                      (SAO.south)
        (AC.north)
                edge  []
                      (SAC.south)
                edge  [decder]
                      (AEC.south)
        (EO.north)
                edge  []
                      (SEO.south east)
                edge  []
                      (AEO.south)
        (EC.north)
                edge  []
                      (SEC.south)
                edge  [decder]
                      (AEC.south)
        (AE.north)
                edge  []
                      (SAE.south)
                edge  [decimp]
                      (AEC.south)
                edge  []
                      (AEO.south)
        ;

      \path[-Latex]
        (SA.north)
                edge  [decimp]
                      (SZAD.south)
        (AE.north)
                edge  [decder]
                      (ZADE.south)
        ;

      \path[-Latex]
        (SAE)   edge  []
                      (SZAE)
                edge  []
                      (SADE)
        ;

      \path[-Latex]
        (SAE)   edge  [bend angle = 40, bend left]
                      (SAEO.south east)
                edge  [bend angle = 40, bend right]
                      (SAEC.south west)
        (SEO)   edge  [undder]
                      (SAEO.south)
        (SAC.north)
                edge  [undder]
                      (SZADC.south)
                edge  [undder, bend left]
                      (SAEC.south west)
        (SAO.north)
                edge  [undder]
                      (SZADO.south)
                edge  [undder]
                      (SAEO.south)
        (SEC)   edge  [undder]
                      (SAEC.south)
        (AEO.north)
                edge  []
                      (ZADEO.south)
                edge  [bend right]
                      (SAEO.south east)
        (AEC.north)
                edge  [decder]
                      (ZADEC.south)
                edge  []
                      (SAEC.south)
        ;

      \path[-Latex]
        (SZAE.north)
                edge  [undder]
                      (SZADE.south)
        (SZAD)  edge  []
                      (SZADE.south)
        (SZAD.north)
                edge  []
                      (SZADO.south)
                edge  []
                      (SZADC.south)
        (SADE.north)
                edge  []
                      (SZADE.south)
        (ZADE)  edge  []
                      (SZADE.south)
        (ZADE.north)
                edge  []
                      (ZADEO.south)
                edge  [decimp]
                      (ZADEC.south)
        ;

      \path[-Latex]
        (SZAE.north)
                edge  [undder]
                      (SZATE.south)
        (SZAD.north)
                edge  [decimp, bend left]
                      (SZATD.south)
        (SADE.north)
                edge  []
                      (SATDE.south)
        (ZADE.north)
                edge  [bend right]
                      (ZATDE.south)
        ;

      \path[-Latex]
        (SZADE) edge  [undder]
                      (SZATDE)
        (SZADC) edge  [undder]
                      (SZATDC.south)
        (SZADO) edge  [undder, bend angle = 35, bend right]
                      (SZATDO)
        (ZADEO) edge  []
                      (ZATDEO.south)
        (ZADEC) edge  [bend angle = 35, bend left]
                      (ZATDEC)
        ;

      \path[-Latex]
        (SAEO)  edge  [undder]
                      (SZATDEO.south)
        (SAEC)  edge  [undder]
                      (SZATDEC.south)
        ;

      \path[-Latex]
        (SZATE) edge  [undder]
                      (SZATDE)
        (SZATD) edge  []
                      (SZATDE)
        (SZATD.north)
                edge  []
                      (SZATDO.south)
                edge  []
                      (SZATDC.south)
        (SATDE) edge  []
                      (SZATDE)
        (ZATDE) edge  []
                      (SZATDE)
        (ZATDE.north)
                edge  []
                      (ZATDEO.south)
                edge  []
                      (ZATDEC.south)
        ;

      \path[-Latex]
        (SZATDE.north)
                edge  [undder]
                      (SZATDEO.south)
                edge  [undder]
                      (SZATDEC.south)
        (SZATDC.north)
                edge  [undder]
                      (SZATDEC.south west)
                edge  [undder]
                      (SZATDOC.south)
        (SZATDO.north)
                edge  [undder]
                      (SZATDEO.south)
                edge  [undder]
                      (SZATDOC.south)
        (ZATDEO.north)
                edge  []
                      (SZATDEO.south east)
                edge  []
                      (ZATDEOC.south)
        (ZATDEC.north)
                edge  []
                      (SZATDEC.south)
                edge  []
                      (ZATDEOC.south)
        ;

    \end{tikzpicture}
    }
    \end{center}
    }
\newcommand{\ie}{\emph{i.e.}\xspace}
\newcommand{\wrt}{\emph{w.r.t.}\xspace}
\def\orcidID#1{\smash{\href{http://orcid.org/#1}{\protect\raisebox{-1.25pt}{
\protect\includegraphics{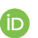}}}}}
\begin{document}

\title{SHACL Satisfiability and Containment\\(Extended Paper)}
\author{Paolo Pareti\inst{1}\orcidID{0000-0002-2502-0011} \and George Konstantinidis\inst{1}\orcidID{0000-0002-3962-9303} \and Fabio	Mogavero \inst{2}\orcidID{0000-0002-5140-5783} \and \\ Timothy J.\ Norman\inst{1}\orcidID{0000-0002-6387-4034} }

\authorrunning{P. Pareti et al.}
\institute{University of Southampton, Southampton, United Kingdom \\ \email{\{pp1v17,g.konstantinidis,t.j.norman\}@soton.ac.uk} \\ \and Universit\`{a} degli Studi di Napoli Federico II, Napoli, Italy \\ \email{fabio.mogavero@unina.it}}

\maketitle
\begin{abstract}
The \emph{Shapes Constraint Language (SHACL)} is a recent W3C recommendation language for validating RDF data. Specifically, SHACL documents are collections of constraints that enforce particular shapes on an RDF graph. Previous work on the topic has provided theoretical and practical results for the validation problem, but did not consider the standard decision problems of \emph{satisfiability} and \emph{containment}, which are crucial for verifying the feasibility of the constraints and important for design and optimization purposes. In this paper, we undertake a thorough study of the different features of SHACL by providing a translation to a new first-order language, called \SCL, that precisely captures the semantics of SHACL w.r.t.\ satisfiability and containment. We study the interaction of SHACL features in this logic and provide the detailed map of decidability and complexity results of the aforementioned decision problems for different SHACL sublanguages. Notably, we prove that both problems are undecidable for the full language, but we present decidable combinations of interesting features.
\end{abstract}

\section{Introduction}
The Shapes Constraint Language (SHACL) has been recently introduced as a W3C recommendation language for the validation of RDF graphs. A SHACL document is a collection of \emph{shapes} which define particular constraints and specify which nodes in a graph should be validated against these constraints. The ability to validate data with respect to a set of constraints is of particular importance for RDF graphs, as they are schemaless by design. Validation can be used to detect problems in a dataset and it can provide data quality guarantees for the purpose of data exchange and interoperability.

Recent work has focused on defining precise semantics and implementations for validation of SHACL documents, in particular for the case of recursion \cite{Corman2018SHACL,SHACLstableModelSemantics}. 
In this paper, instead, we focus on the decision problems of satisfiability and containment for SHACL documents; problems which have not been previously investigated. Given a particular SHACL document, satisfiability is the problem of deciding whether there is an RDF graph which is validated by the document; we also investigate  finite satisfiability, that is, whether there exists a valid graph of finite size. 
Containment studies whether a particular SHACL document is subsumed by a second one; that is, whether all graphs that are validated by the first are also validated by the second. 
We investigate whether these decision problems can be decided not only at the level of documents, but also for individual shapes (i.e. sets of constraints) within documents.

Satisfiability and containment are standard decision problems that have important applications in optimization and design. When integrating two datasets subject to two different SHACL documents, for example, it is important to know whether the two SHACL documents are in conflict with each other, or if one of them is subsumed by the other. At the level of shapes, an unsatisfiable shape constraint might not necessarily cause the unsatisfiability of a whole SHACL document, but it is likely an indication of a design error. Being able to decide containment for individual shapes offers more design choices to the author of a SHACL document, and it is a venue for optimization.

In this paper we focus on the \emph{core constraint components} of SHACL \cite{2017SHACL} and we do not consider recursion. Validation under recursion is left unspecified in SHACL and, while different semantics have been proposed \cite{Corman2018SHACL,SHACLstableModelSemantics}, we already show that even without it the language has undecidable satisfiability and containment. For a subset of the core constraint components
and a restricted form of recursion (\`a la stratified negation), containment of individual shape constraints is shown to be decidable in  \cite{martin2020shapecontainment}. This is achieved via reduction to description logic reasoning~\cite{BCMNP03}, reminiscent to our Thm~\ref{thm:sat(sza)}.

One of our main contributions is a comprehensive translation of SHACL into \SCL , a new fragment of first-order logic extended with counting quantifiers and the transitive closure operator. To the best of our knowledge such a translation has not been attempted before. Previous formalisations of SHACL semantics were either limited to certain aspects of the specification, such as constraints \cite{Corman2018SHACL}, or to subsets of the core constraint components \cite{pareti2019c}.
Our approach, instead, translates a SHACL document to an \SCL\ equisatisfiable sentence, i.e., there is a valid RDF graph for the first if and only if there is a model for the second. 

Distinct SHACL constructs translate to particular
\SCL\ features of different expressiveness. 
We identify eight such prominent features (such as counting quantifiers or transitive closure) that can be used on top of a base logic and study their interactions. 
On one hand, the full language is undecidable and, in fact, so are most fragments with just three or four features. On the other hand, our base language has decidable satisfiability and containment, and it is ExpTime-complete. We create a detailed map, in between these extremes, proving positive and negative results for many interesting combinations.

\section{Background and Problem Definition}
\label{sec:problem}

The core structure of the RDF data model is a graph whose nodes and edges are defined by a set of \emph{triples}. A triple \trip{s}{p}{o} identifies an edge with label $p$, called \emph{predicate}, from a node $s$, called \emph{subject}, to a node $o$, called \emph{object}. The main type of entities that act as nodes and edges in RDF graphs are IRIs. We represent RDF graphs in Turtle syntax and by abbreviating IRIs using XML namespaces; the namespace \texttt{sh} refers to SHACL terms.

In an RDF graph, \emph{literal} constants (representing datatype values) can only appear in the object position of a triple, while in \emph{generalized} RDF \cite{Hayes2014GeneralisedRDF} they can appear in any position. We will use the generalised model for simplicity. Most of our results apply to both data models and we will state clearly when this is not the case. 
We do not use variables in the predicate position in this paper and so we represent triples as binary relations in FOL. We use the atom $R(s,o)$ as a shorthand for $\trip{s}{R}{o}$. We use a minus sign to identify the \emph{inverse} atom, namely $R^{-}(s,o) = R(o,s)$. We use the binary relation name \isA to represent class membership triples $\trip{s}{\texttt{rdf:type}}{o}$ as $\isA(s,o)$.

\begin{figure}[t]
\begin{minipage}[t]{0.59\linewidth}
\begin{lstlisting}
:studentShape a sh:NodeShape ;
    sh:targetClass :Student ;
    sh:not :disjFacultyShape .
:disjFacultyShape a sh:PropertyShape ;
      sh:path (:hasSupervisor :hasFaculty);
      sh:disjoint :hasFaculty .
\end{lstlisting}
\end{minipage}\hfill
\begin{minipage}[t]{0.39\linewidth}
\begin{lstlisting}
:Alex a :Student ;
  :hasFaculty :CS ;
  :hasSupervisor :Jane .
:Jane :hasFaculty :CS .
\end{lstlisting}
\end{minipage}
\vspace{-1em}
\captionof{figure}{A SHACL document (left) and a graph that validates it (right).}
\label{fig:MainExample}
\end{figure}

SHACL defines constraints that can validate \texttt{RDF} graphs \cite{2017SHACL}.
A SHACL document is a set of \emph{shapes}. A shape, denoted \vshape\pair{\vshapet}{\vshapec}, has three main components: (1) a set of \emph{constraints} which are used in conjunction, and hence referred to as a single constraint $\vshapec$; (2) a set of \emph{target declarations}, referred to as \emph{target definition} \vshapet , which provides a set of RDF nodes that are validated against $\vshapec$; and (3) a \emph{shape name} $\vshape$. One can think of $\vshapet$ and $\vshapec$ as unary queries over the nodes of $G$.
Given a node $\vn$ in a graph $G$, and a shape \vshape\pair{\vshapet}{\vshapec}, we denote with $G \models \vshapet(\vn)$ 
the fact that node $\vn$ that satisfies definition $\vshapet$, and $G \models \vshapec(\vn)$ denotes that a node $\vn$ validates $\vshapec$ in $G$.
A graph $G$ validates a shape \vshape\pair{\vshapet}{\vshapec}, formally $G \models \vshape\pair{\vshapet}{\vshapec}$, iff every node in the target $\vshapet$ validates the constraints \vshapec, that is, 
iff for all $\vn \in G,$ if $G \models \vshapet(\vn)$ then $G \models \vshapec(\vn)$.
An empty target definition is never satisfied while an empty constraint definition is always satisfied. A graph $G$ validates a set of shape definitions, i.e.\ a SHACL document, $\vShapeDocument$, formally $G \models \vShapeDocument$, iff $G$ validates all the shapes in $\vShapeDocument$.
Constraints might refer to other shapes.
When a shape is referenced by another shape it can be handed down a set of  \emph{focus nodes} to validate, in addition to those from its own target definition. A shape is \emph{recursive} when it references itself (directly or through other shapes). As mentioned, we focus on non-recursive SHACL documents using the SHACL core constraint components. Without loss of generality, we assume that shape names in a SHACL document do not occur in other SHACL documents or graphs. 

The example SHACL document in Figure \ref{fig:MainExample} defines the constraint that, intuitively, all students must have at least one supervisor from the same faculty. The shape with name \texttt{:studentShape} has class \texttt{:Student} as a target, meaning that all members of this class must satisfy the constraint of the shape. The constraint definition of \texttt{:studentShape} requires the non-satisfaction of shape \texttt{:disjFacultyShape}, i.e., a node satisfies \texttt{:studentShape} if it does not satisfy \texttt{:disjFacultyShape}. The \texttt{:disjFacultyShape} shape states that an entity has no faculty in common with any of their supervisors (the \sh{path} term defines a property chain, i.e., a composition of roles \texttt{:hasSupervisor} and \texttt{:hasFaculty}). A graph that validates these shapes is provided in Figure \ref{fig:MainExample}. It can be made invalid by changing the faculty of \texttt{:Jane} in the last triple.

We now define the SHACL satisfiability and containment problems.

\begin{enumerate}[(i)]
     \item
        \textbf{SHACL Satisfiability}: A SHACL document $\vShapeDocument$ is satisfiable iff there exists a graph $G$ such that $G \models \vShapeDocument$.
    \item
        \textbf{Constraint Satisfiability}: A SHACL constraint \vshapec\ 
        is satisfiable iff there exists a graph $G$ and a node $\vn$ such that $G \models \vshapec(\vn)$.
    \item
        \textbf{SHACL Containment}: For all SHACL documents $\vShapeDocument_1$, $\vShapeDocument_2$, we say that $\vShapeDocument_1$ is contained in $\vShapeDocument_2$, denoted $\vShapeDocument_1 \subseteq$ $\vShapeDocument_2$, iff for all graphs $G$, if $ G \models \vShapeDocument_1$ then $G \models \vShapeDocument_2$.
    \item
        \textbf{Constraint Containment}: For all SHACL constraints $\vshapec_1$ and $\vshapec_2$ 
        we say that $\vshapec_1$ is contained in $\vshapec_2$, 
        denoted by  $\vshapec_1, \subseteq \vshapec_2$ iff for all graphs $G$ and nodes \vn , 
 if $G \models \vshapec_1(\vn)$ then $G \models \vshapec_2(\vn)$.
\end{enumerate}

The satisfiability and containment problems for constraints can be reduced to SHACL satisfiability, as follows.
A constraint \vshapec\ is satisfiable iff there exists a constant \conc , either occurring in \vshapec\ or a fresh one, such that the SHACL document corresponding to shape $\vshape\pair{\vshapet_{\conc}}{\vshapec}$ is satisfiable, where $\vshapet_{\conc}$ is the target definition that targets node $\conc$. Similarly, constraint $\vshapec_1$ is not contained in $\vshapec_2$ iff there exists a constant \conc , occurring in $\vshapec_1$, $\vshapec_2$ or a fresh one, such that the SHACL document corresponding to shape $\vshape\pair{\vshapet_{\conc}}{\vshapec'}$ is satisfiable; $\vshapec'(\vx)$ is true whenever $\vshapec_1(\vx)$ is true and $\vshapec_2(\vx)$ if false. Thus, satisfiability and containment of constraints in a given SHACL fragment are decidable whenever SHACL satisfiability of that fragment is decidable, and have the same complexity upper bound. However, undecidability of SHACL satisfiability in a fragment does not necessarily imply undecidability for the two constraint problems; we leave this as an open problem.

\section{A First Order Language for SHACL Documents} \label{sec:language}
In this section we present a translation of SHACL into an equisatisfiable fragment of FOL extended with counting quantifiers and the transitive closure operator, called \SCL. As discussed before, for a shape \vshape\pair{\vshapet}{\vshapec}\ in a SHACL document \vShapeDocument , $t$ and $d$ can be seen as unary queries. 
Intuitively, given a suitable translation $q$ from SHACL into FOL, \vShapeDocument\ is satisfiable iff the sentence
$\bigwedge_{\vshape\pair{\vshapet}{\vshapec} \in \vShapeDocument} \forall x . \; q(\vshapet(\vx)) \rightarrow q(\vshapec(\vx))$ 
is satisfiable, i.e., a node in the target definition of a shape needs to satisfy its constraint, for every shape. We subsequently present an approach that constructs such a sentence. This is reminiscent of  \cite{SHACL2SPARQLtranslation}, where a SHACL document \vShapeDocument\ is translated into a SPARQL query that is true on graphs which however violate \vShapeDocument.
Intuitively, this query corresponds to sentence $\bigvee_{\vshape\pair{\vshapet}{\vshapec} \in \vShapeDocument} \exists x . q(\vshapet(\vx)) \wedge \neg q(\vshapec(\vx))$, i.e.\ the negation of the sentence above. Nevertheless, several assumptions made in \cite{SHACL2SPARQLtranslation}, such that ordering two values is not more complex than checking their equivalence, do not hold for the purposes of satisfiability and containment. Therefore the translation from \cite{SHACL2SPARQLtranslation} cannot be used to reduce SHACL satisfiability and containment to the satisfiability and containment of SPARQL queries. We will use $\tau$ to denote the translation function from a SHACL document $\vShapeDocument$ to an \SCL\ sentence $\tau(\vShapeDocument)$, which is polynomial in the size of $\vShapeDocument$ and computable in polynomial time.
We refer to our appendix\footnote{\url{http://w3id.org/asset/ISWC2020}} for the complete translations of $\tau$ and its inverse $\tau^{-}$.

Next, we present our grammar of \SCL\ in Def.~\ref{def:syn}.  
For simplicity, we assume that target definitions contain at most one target declaration, and that shapes referenced by other shapes have an empty target definition.
This does not affect generality, as any shape can be trivially split in multiple copies: one per target declaration and one without any. Letters in square brackets in Def.~\ref{def:syn} are annotations naming \SCL\  features and thus are not part of the grammar. 
The top-level symbol $\varphi$ in \SCL\ corresponds to a SHACL document. This could be empty ($\top$), a conjunction of documents, or the translation of an individual shape. A sentence that corresponds to a single shape could have five different forms in \SCL, depending on the target definition of the translated shape.
These are summarized in Table~\ref{tab:translationTargs}, where $\taus{\vshapec}{\vx}$ is the \SCL\ translation of the constraint of the shape. 
In SHACL only four types of target declarations are allowed:  (1) a particular constant \texttt{c} (node target), (2) instances of class \texttt{c} (class target), or (3)/(4) subjects/objects of a triple with predicate \texttt{R} (subject-of/object-of target). Our translation function gives explicit names to referenced shapes using the \hasshapePredicate relation. 
We refer to the last component of the $\varphi$ rule (i.e., $\forall \vx . \; \hasshape{\vx}{\sconst} \leftrightarrow \psi(\vx)$) as a \emph{\namedshapedefinition } and to its internal constant $\sconst$ as \emph{\namedshape }.

\begin{table}[t]
\begin{center}
\caption{ 
Translation of shape $\vshape\pair{\vshapet}{\vshapec}$ in SCL with respect to its target definition $\vshapet$. 
} 
\label{tab:translationTargs}
\begin{tabular}{ |l | l |}
 \hline
 Target declaration in $\vshapet$  & Translation $\tau(\vshape\pair{\vshapet}{\vshapec})$ \\ \hline 
 Node target (node \conc) & $\taus{\vshapec}{\conc}$ (equivalent form of: $\forall \vx . \; \vx =\conc \; \rightarrow \taus{\vshapec}{\vx} $ ) \\ \hline
 Class target (class \texttt{c})  & $\forall \vx . \isA(\vx,$\texttt{c}$)\rightarrow \taus{\vshapec}{\vx}$\\ \hline
 Subjects-of target (relation $R$)  & $\forall \vx, \vy . R(\vx,\vy) \rightarrow \taus{\vshapec}{\vx}$\\ \hline
 Objects-of target (relation $R$)  & $\forall \vx, \vy . R^{-}(\vx,\vy) \rightarrow \taus{\vshapec}{\vx}$ \\ \hline
 No target declaration & $\forall \vx . \; \hasshape{\vx}{\sconst} \leftrightarrow \taus{\vshapec}{\vx}$ \\ \hline
\end{tabular}
\end{center}
\end{table}

The non terminal symbol $\psi(\vx)$ corresponds to the subgrammar of the SHACL constraints. 
Within this subgrammar, $\top$ identifies an empty constraint, $\vx = \conc$ a constant equivalence constraint and $F$ a monadic filter relation (e.g.\ $F^{\isIRI}(\vx)$, true iff $\vx$ is an IRI).
By \emph{filters} we refer to the SHACL constraints about ordering, node-type, datatype, language tag, regular expressions and string length. Filters are captured by $F(\vx)$ and the \lO\ component. The \C\ component captures qualified value shape cardinality constraints. The \E, \D\ and \lO\ components capture the equality, disjointedness  and order property pair components. 
The $\pi(\vx, \vy)$ subgrammar models SHACL property paths. Within this subgrammar \S\ denotes sequence paths, \A\ denotes alternate paths, \Z\ denotes a zero-or-one path and \T\ denotes a zero-or-more path.

\begin{defn}
    \label{def:syn}
    The \emph{SHACL} first-order language (\SCL, for short) is the set of \emph{sentences ($\varphi$)} and \emph{one-variable formulas ($\psi(\vx)$)} built according to the following context-free grammar, where $\conc$ and $\sconst$ are constants (from disjoint domains), $F$ is a monadic-filter name,
    $R$ is a binary-relation name, $^{\star}$ indicates the transitive closure of the relation induced by $\pi(\vx,\vy)$, the superscript $\pm$ refers to a relation or its inverse, and $n \in \mathbb{N}$:
    \begin{align*}
      \varphi \gramdef\;
        & \top \mid \psi(\conc) \mid \forall \vx \,.\, \isA(\vx, \conc) \rightarrow
          \psi(\vx) \mid \forall \vx, \vy \,.\, R^{\pm}(\vx, \vy) \rightarrow
          \psi(\vx) \mid \varphi \wedge \varphi; \mid \\
          & \forall \vx . \; \hasshape{\vx}{\sconst} \leftrightarrow \psi(\vx) \, ; \\
      \psi(\vx) \gramdef\;
        & \top \mid \vx = \conc \mid F(\vx) \mid \hasshape{\vx}{\sconst} \,  \mid \neg \psi(\vx) \mid \psi(\vx)
          \wedge \psi(\vx) \mid  \\
        & \exists \vy .\, \pi(\vx, \vy) \wedge \psi(\vy)
          \mid \neg \exists \vy .\, \pi(\vx, \vy) \wedge R(\vx, \vy)
          \,\text{\textbf{[\D]}} \mid \forall \vy .\, \pi(\vx, \vy)
          \leftrightarrow R(\vx, \vy) \,\text{\textbf{[\E]}} \mid \\
        & \forall \vy, \vz \,.\, \pi(\vx, \vy) \wedge R(\vx, \vz) \rightarrow
          \sigma(\vy, \vz) \,\text{\textbf{[\O]}} \mid \exists^{\geq n} \vy
          \,.\, \pi(\vx, \vy) \wedge \psi(\vy) \,\text{\textbf{[\C]}} ; \\
      \pi(\vx, \vy) \gramdef\;
        & R^{\pm}(\vx, \vy) \mid \exists \vz \,.\, \pi(\vx, \vz) \!\wedge\!
          \pi(\vz, \vy) \,\text{\textbf{[\S]}} \mid \vx \!=\! \vy \!\vee\!
          \pi(\vx, \vy) \,\text{\textbf{[\Z]}} \mid \pi(\vx, \vy) \!\vee\!
          \pi(\vx, \vy) \,\text{\textbf{[\A]}} \mid \\
        & (\pi(\vx, \vy))^{\star} \,\text{\textbf{[\T]}}; \\
      \sigma(\vx, \vy) \gramdef\;
        & \vx <^{\pm} \vy \mid \vx \leq^{\pm} \vy.  
    \end{align*}
  \end{defn}
To enhance readability, we define the following syntactic shortcuts: 
\begin{enumerate}[(i)]
    \item
        $\psi_{1}(\vx) \vee \psi_{2}(\vx) \grameq \neg (\neg \psi_{1}(\vx) \wedge \neg \psi_{2}(\vx))$;
    \item
        $\pi(\vx, \conc) \grameq \exists \vy . \pi(\vx, \vy) \wedge \vy = \conc$;
    \item
        $\forall \vy \,.\, \pi(\vx, \vy) \rightarrow \psi(y) \grameq \neg \exists \vy \,.\, \pi(\vx, \vy) \wedge \neg \psi(\vy)$.
\end{enumerate}

\begin{figure}[t]
\begin{minipage}[t]{0.48\linewidth}
\begin{lstlisting}

select ?x where {
    ?x rdf:type :Student .
    filter not exists {
        ?x :hasSupervisor ?z .
        ?z :hasFaculty ?y .
        ?x :hasFaculty ?y .  } }
\end{lstlisting}
\end{minipage}\hfill
\begin{minipage}[t]{0.50\linewidth}
\begin{align*}
( \forall \vx . \; & \isA(\vx, \texttt{:Student}) \rightarrow  \\
& \neg    \hasshape{\vx}{\texttt{:disjFacultyShape}} ) \:\wedge \\
( \forall \vx . &\; \hasshape{\vx}{ \texttt{:disjFacultyShape}} \leftrightarrow  \\ &
\neg \exists \vy . \; ( \; \exists \vz . \;   R_{\texttt{:hasSupervisor}}(\vx, \vz) \; \wedge
\\ & \hphantom{\neg \exists \vy . ( \; \exists \vz . \;\;} R_{\texttt{:hasFaculty}}(\vz, \vy) \; \wedge
\\ & \hphantom{\neg \exists \vy . ( \; \exists \vz . \;\;} R_{\texttt{:hasFaculty}}(\vx, \vy) \; ) \; )
\end{align*}
\end{minipage} 
\captionof{figure}{Translation of the SHACL document  from Fig. \ref{fig:MainExample} into the SPARQL query that looks for violations (left) and into an \SCL\ sentence (right). } 
\label{fig:TranslationExample}
\end{figure}

Our translation $\tau$ results in a subset of \SCL\ sentences, called \emph{well-formed}. An \SCL\ sentence is well-formed if for every occurrence of a \namedshape\ $s$ there is a corresponding \namedshapedefinition\ sentence with the same $s$, and no \namedshapedefinition s are recursively defined.  
Fig. \ref{fig:TranslationExample} shows the translation of the document from Fig. \ref{fig:MainExample}, into a SPARQL query, via \cite{SHACL2SPARQLtranslation},  
and a well-formed \SCL\ sentence, via $\tau$.  

To distinguish different fragments of \SCL , Table \ref{tab:SHACLcomponentsInOurGrammar} lists a number of \emph{prominent} SHACL components, that is, important for the purpose of satisfiability. The language defined without any of these constructs is our \emph{base} language, denoted \X .
When using such an abbreviation of a prominent feature, we refer to the fragment of our logic that includes the base language together with that feature enabled. 
For example, \S \A\ identifies the fragment that only allows the base language, sequence paths and alternate paths.

The SHACL specification presents an unusual asymmetry in the fact that equality, disjointedness and order components forces one of their two path expressions to be an atomic relation. 
This can result in situations where the order constraints can be defined in just one direction, since only the less-than and less-than-or-equal property pair constraints are defined in SHACL. 
Our \O\ fragment models a more natural order comparison that includes the $>$ and $\geq$ components. We instead denote with \O' the fragment where the order relations in the $\sigma(\vx, \vy)$ subgrammar cannot be inverted. 

\begin{table}[t]
\begin{center}
\caption{Relation between prominent SHACL components and \SCL\ expressions.}\label{tab:SHACLcomponentsInOurGrammar}
\begin{tabular}{ | l | l |l | l |}
 \hline
 Abbr. & Name  & SHACL component & Corresponding expression  \\ \hline
 \S & Sequence Paths & Sequence Paths & $\exists \vz \,.\, \pi(\vx,\vz) \wedge \pi(\vz,\vy)$ \\ \hline
 \Z & Zero-or-one  Paths  & \sh{zeroOrOnePath} & $\vx = \vy \vee \pi(\vx, \vy)$ \\ \hline
  \A & Alternative Paths & \sh{alternativePath} & $\pi(\vx, \vy) \vee \pi(\vx, \vy)$ \\ \hline
  \T & Transitive Paths  & \begin{tabular}{@{}l@{}}\sh{zeroOrMorePath} \\ \sh{oneOrMorePath}
 \end{tabular}  & $(\pi(\vx, \vy))^{\star}$ \\ \hline
 \D & Property Pair Disjointness & \sh{disjoint} & $\neg \exists \vy . \pi(\vx, \vy) \wedge R(\vx, \vy)$ \\ \hline
 \E & Property Pair Equality & \sh{equals} & $\forall \vy \,.\, \pi(\vx,\vy) \leftrightarrow R(\vx,\vy)$ \\ \hline
 \O & Property Pair Order  & \sh{lessThanOrEquals} & \begin{tabular}{@{}l@{}}$\vx \leq^{\pm} \vy \text{ and } \vx <^{\pm} \vy $ 
 \end{tabular} \\ \hline
 \C & Cardinality Constraints  & \begin{tabular}{@{}l@{}}\sh{qualifiedValueShape} \\ \sh{qualifiedMinCount} \\ \sh{qualifiedMaxCount}\end{tabular}   & \begin{tabular}{@{}l@{}}$\exists^{\geq n} \vy \,.\, \pi(\vx, \vy) \wedge \psi(\vy)$ \\ with $n \not = 1$\end{tabular}\\ \hline
\end{tabular} 
\end{center}
\end{table} 

Relying on the standard FOL semantics, we define the satisfiability and containment for \SCL\ sentences, as well as the closely related finite-model property, in the natural way.

\begin{description}

\item[\SCL\ Sentence Satisfiability] An \SCL\ sentence $\phi$ is satisfiable iff there exists a first-order structure $\Omega$ such that $\Omega \models \phi$. 

\item[\SCL\ Sentence Containment]
For all \SCL\ sentences $\phi_1$, $\phi_2$, we say that $\phi_1$ is contained in $\phi_1$, denoted $\phi_1 \subseteq$ $\phi_2$, iff,
for all first-order structures $\Omega$, if $\Omega \models \phi_1$ then $\Omega \models \phi_2$. 

\item[\SCL\ Finite-model Property ]
An \SCL\ sentence $\phi$ (resp. \ formula $\psi(\vx)$) enjoys the finite-model property iff whenever $\phi$ is satisfiable, it is so on a finite model.
\end{description} 

In the following two subsections, we discuss SHACL-to-\SCL\ satisfiability and containment. In this respect, we assume that filters are interpreted relations. In particular, we prove equisatisfiability of SHACL and \SCL\ on models that we call \emph{canonical}, that is, having the following properties: (1) the domain of the model is the set of RDF terms, (2) such a model contains built-in interpreted relations for filters, and (3) ordering relations $<^{\pm}$ and $\leq^{\pm}$ are the disjoint union of the total orders of the different comparison types allowed in SPARQL. 
In Sec. \ref{sec:filters}, we discuss an explicit axiomatization of the semantics of a particular set of filters in order to prove decidability of the satisfiability and containment problems for several \SCL\ fragments in the face of these filters. 

\subsection{SHACL Satisfiability}\label{sec:sat}

A fine-grained analysis of the bidirectional translation between our grammar and SHACL, provided in the appendix, can lead to an inductive proof of equisatisfiability between the two languages. In particular, given a satisfiable SHACL document $\vShapeDocument$ which validates an RDF graph $G$, we can translate $G$ and $\vShapeDocument$ into a canonical first-order structure $\structureOmegaMain$ which models $\tau({\vShapeDocument})$, thus proving the latter satisfiable, and vice versa. Intuitively, the structure $\structureOmegaMain$ is composed of two substructures, \structureOmegaG\ which corresponds to the translation of triples from $G$, and \structureOmegaS\ which interprets the \hasshapePredicate\ relation. These substructures, as explained below, have disjoint interpretations and we write $\structureOmegaMain = \structureOmegaG \cup \structureOmegaS$ to denote that $I$ is the structure that considers the union of their domains and of their interpretations. 

For any RDF predicate $R$ in $G$, the structure $\structureOmegaG$ is a canonical structure that interprets the binary relation $R$ as the set of all pairs $\langle s,o \rangle$ for which $\trip{s}{R}{o}$ is in $G$. 
The structure \structureOmegaS\ interprets \hasshapePredicate\ as the binary relation which, for all referenced shape definitions $\forall \vx . \; \hasshape{\vx}{\sconst} \leftrightarrow \psi(\vx)$ in $\tau({\vShapeDocument})$, it contains a pair $\langle c,s \rangle$ whenever \structureOmegaG\ satisfies $\psi(c)$.
We will call $\structureOmegaS$ the \textit{shape definition model} of $G$ and $\vShapeDocument$. Since we do not address recursive shape definitions, this model always exists (corresponding to the \emph{faithful total assignment} from \cite{Corman2018SHACL}).  
Inversely, given a well-formed $\SCL$ sentence $\phi$ that is satisfiable and has a model $I$, by eliminating from $I$ all references of \hasshapePredicate and then transforming the elements of the relations to triples we get an RDF graph $G$ that is valid w.r.t.\ the SHACL document $\tau^{-}(\phi)$.  

\begin{theorem} \label{theorem1}
    For all SHACL documents $\vShapeDocument$:
    \begin{inparaenum}[(1)]
        \item
            $\tau(\vShapeDocument)$ is polynomially computable;
        \item
            $\vShapeDocument$ is (finitely) satisfiable iff $\tau(\vShapeDocument)$ is (finitely) satisfiable on a canonical model.
    \end{inparaenum}
      \\  For all  well-formed \SCL\ sentences $\phi$:
    \begin{inparaenum}[(1)]
        \item
            $\tau^{-}(\phi)$ is polynomially computable;
        \item
            $\phi$ is (finitely) satisfiable on canonical models iff $\tau^{-}(\phi)$ is (finitely) satisfiable.
    \end{inparaenum}
\end{theorem}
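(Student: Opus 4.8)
The plan is to prove all four claims by a structural induction that mirrors the grammar of Def.~\ref{def:syn} and the inductive definitions of $\tau$ and $\tau^{-}$ supplied in the appendix.

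Claim (1), in both directions, is the easy part. By inspection of the translation rules, every SHACL shape and every core constraint component maps to an \SCL\ sub-formula of constant size --- the only numeric parameter is the cardinality $n$ of a counting constraint, which is copied over in binary --- and each referenced shape contributes exactly one fresh \namedshapedefinition. Hence $|\tau(\vShapeDocument)|$ is linear in $|\vShapeDocument|$ and the sentence is produced by a single traversal; symmetrically, $\tau^{-}$ restricted to well-formed sentences is polynomial. I would record this as a short lemma proved by case analysis on the grammar productions.

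For claim (2), the heart of the matter is the following pair of correspondences, which I would isolate as a lemma and prove by mutual induction on the structure of constraint formulas $\psi(\vx)$ and path expressions $\pi(\vx,\vy)$. Fix an RDF graph $G$, let $\structureOmegaG$ be its canonical structure, let $\structureOmegaS$ be the shape definition model, and put $\structureOmegaMain = \structureOmegaG \cup \structureOmegaS$. Then: (a) for every node $n$ and every constraint $\vshapec$ occurring in $\vShapeDocument$, $\valid{G}{\vshapec(n)}$ iff $\structureOmegaMain \models \taus{\vshapec}{n}$; and (b) for all nodes $m,n$ and every SHACL property path $p$, the pair $(m,n)$ lies in the evaluation of $p$ on $G$ iff $\structureOmegaMain$ satisfies the \SCL\ translation $\pi_{p}(m,n)$ of $p$. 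The base cases are the atomic filters --- which hold precisely because canonical models fix the built-in interpretation of filter relations and of $<^{\pm},\leq^{\pm}$ as the disjoint union of the SPARQL comparison orders --- together with constant equality $\vx = \conc$ and the atomic relations $R^{\pm}$; the inductive steps for $\wedge$, $\neg$, existential path navigation, the counting quantifier, and the \D, \E\ and \O\ components are routine unfoldings, using the shortcuts (i)--(iii) where convenient. The delicate case is $\hasshape{\cdot}{\sconst}$: because $\vShapeDocument$ is non-recursive the referenced-shape definitions carry a well-founded dependency order, so $\structureOmegaS$ is uniquely and consistently determined and, by construction, $\structureOmegaMain \models \hasshape{n}{\sconst}$ iff $\structureOmegaG$ satisfies the body of the $\sconst$-definition at $n$; this is the step that appeals to the \emph{faithful total assignment} of \cite{Corman2018SHACL}. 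Given the lemma, the ``only if'' direction of (2) follows by taking a graph with $\valid{G}{\vShapeDocument}$, forming the (canonical) structure $\structureOmegaMain$, and checking each shape-conjunct of $\tau(\vShapeDocument)$ via Table~\ref{tab:translationTargs} and correspondence (a). The ``if'' direction, and both claims about $\tau^{-}$, run in reverse: from a canonical model $\structureOmegaMain$ of a well-formed sentence $\phi$, delete the \hasshapePredicate\ part, read the remaining binary relations off as RDF triples to recover a graph $G$, and reuse the lemma to conclude $\valid{G}{\tau^{-}(\phi)}$.

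For the parenthetical ``finitely'' strengthening I would observe that both transformations are size-preserving in the sense that matters here. When $G$ is finite, the truth of $\tau(\vShapeDocument)$ in $\structureOmegaMain$ depends only on the finite active domain --- the RDF terms occurring in $G$ plus the finitely many shape names and constants of $\vShapeDocument$ --- so that, although a canonical model is formally defined over the whole universe of RDF terms, one may equivalently reason with finite structures; conversely, the graph extracted from a finite model is finite. I expect the main obstacle to be precisely this interplay between canonicity and finiteness, namely arguing that passing to the active domain leaves all filter atoms and order comparisons unchanged so that the disjoint-union-of-total-orders requirement is still met on the finite part, together with the $\hasshape{\cdot}{\sconst}$ case, where non-recursiveness has to be shown genuinely to exclude the circularity that would otherwise make $\structureOmegaS$ ill-defined. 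Everything else is a mechanical, if lengthy, walk through the grammar.
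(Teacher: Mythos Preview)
Your proposal is correct and follows essentially the same approach as the paper: the paper's argument, given in the prose of Sec.~\ref{sec:sat} rather than in a formal proof block, sketches exactly the construction $\structureOmegaMain = \structureOmegaG \cup \structureOmegaS$ you describe, appeals to the faithful total assignment of~\cite{Corman2018SHACL} for the non-recursive $\hasshapePredicate$ case, and handles the converse by stripping $\hasshapePredicate$ and reading relations off as triples. Your proposal is in fact more explicit than the paper---you spell out the mutual induction on $\psi$ and $\pi$, the base cases for filters and orders on canonical models, and the active-domain argument for the finite variant---none of which the paper details beyond saying that ``a fine-grained analysis \ldots\ can lead to an inductive proof''.
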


\subsection{SHACL Containment}

Containment of two SHACL documents does not immediately correspond to the containment of their SCL translations. Given two SHACL documents $\vShapeDocument_1$ and $\vShapeDocument_2$ where $\vShapeDocument_1$  is contained in $\vShapeDocument_2$, there might exist a first-order structure $\structureOmegaMain$  
that models $\tau(\vShapeDocument_1)$ but not $\tau(\vShapeDocument_2)$. Notice, in fact, that structure $\structureOmegaMain = \structureOmegaG \cup \structureOmegaOne$ models $\vShapeDocument_1$, but that $\structureOmegaOne$ does not necessarily model the referenced shape definitions of $\tau(\vShapeDocument_2)$. 
Let $\delta(\phi)$ be the definitions of \namedshape s in an \SCL\ sentence $\phi$.  
Note that for a graph $G$ and a SHACL document \vShapeDocument\ the shape definition model \structureOmegaS\ models $\delta(\tau(\vShapeDocument))$. The reduction of SHACL containment into \SCL\ is, therefore, as follows. This result also applies for containment over finite structures.

\begin{theorem} \label{theorem2}
    For all SHACL documents $\vShapeDocument_1$ and $\vShapeDocument_2$:
    \begin{inparaenum}[(1)]
        \item
            $\delta(\tau(\vShapeDocument_{2}))$ is polynomially computable;
        \item
            $\vShapeDocument_1 \subseteq \vShapeDocument_2$ iff $\tau(\vShapeDocument_1) \wedge \delta(\tau(\vShapeDocument_2)) \subseteq \tau(\vShapeDocument_2)$ on all canonical models.
    \end{inparaenum}
\end{theorem}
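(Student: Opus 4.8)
The plan is to reduce Theorem~\ref{theorem2} to the pointwise graph-to-structure correspondence that already underlies Theorem~\ref{theorem1}, the only genuinely new ingredient being the role of $\delta(\tau(\vShapeDocument_2))$ on the left-hand side: it is there to fix the interpretation of the $\hasshapePredicate$ atoms of $\vShapeDocument_2$, since a model of $\tau(\vShapeDocument_1)$ alone is free to interpret those atoms arbitrarily, which is exactly why the naive reduction ``$\tau(\vShapeDocument_1) \subseteq \tau(\vShapeDocument_2)$'' fails. Claim~(1) is immediate: $\tau$ is polynomially computable by Theorem~\ref{theorem1}, and $\delta$ merely selects, from the conjunction $\tau(\vShapeDocument_2)$, the conjuncts of shape $\forall \vx . \; \hasshape{\vx}{\sconst} \leftrightarrow \psi(\vx)$.

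For claim~(2) I would first record the decomposition $\tau(\vShapeDocument) = \tau^{t}(\vShapeDocument) \wedge \delta(\tau(\vShapeDocument))$, where $\tau^{t}(\vShapeDocument)$ collects the conjuncts that translate shapes carrying a target declaration (the first four rows of Table~\ref{tab:translationTargs}) and $\delta(\tau(\vShapeDocument))$ collects the \namedshapedefinition s. The argument then rests on two facts to be distilled from the construction behind Theorem~\ref{theorem1}, together with the standing assumption that $\vShapeDocument_1$ and $\vShapeDocument_2$ use disjoint sets of shape names. \textbf{(a)}~For any RDF graph $G$ and SHACL document $\vShapeDocument$, the shape definition model $\structureOmegaS$ exists and is the \emph{unique} interpretation of $\hasshapePredicate$ over the shape names of $\vShapeDocument$ that, jointly with $\structureOmegaG$, satisfies $\delta(\tau(\vShapeDocument))$; uniqueness is where non-recursiveness of \namedshapedefinition s enters, as the equivalences $\hasshape{\vx}{\sconst} \leftrightarrow \psi(\vx)$ then fix $\hasshapePredicate$ by well-founded recursion over $G$. \textbf{(b)}~A canonical structure $\structureOmegaMain$ whose interpretation of the RDF relations encodes exactly $G$ and whose interpretation of $\hasshapePredicate$, restricted to the shape names of $\vShapeDocument$, equals $\structureOmegaS$, satisfies $\tau(\vShapeDocument)$ iff $G \models \vShapeDocument$ --- this is precisely the pointwise correspondence implicit in the equisatisfiability argument of Theorem~\ref{theorem1}, read off conjunct by conjunct (the $\delta(\tau(\vShapeDocument))$ part via~(a), the $\tau^{t}(\vShapeDocument)$ part by unfolding the target translations of Table~\ref{tab:translationTargs}).

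Given (a) and (b), the \emph{``only if''} direction goes as follows. Assume $\vShapeDocument_1 \subseteq \vShapeDocument_2$ and let $\structureOmegaMain$ be any canonical model of $\tau(\vShapeDocument_1) \wedge \delta(\tau(\vShapeDocument_2))$; let $G$ be the graph read off from the RDF relations of $\structureOmegaMain$. As $\tau(\vShapeDocument_1)$ has $\delta(\tau(\vShapeDocument_1))$ among its conjuncts, (a) forces the interpretation of $\hasshapePredicate$ over $\vShapeDocument_1$'s names in $\structureOmegaMain$ to be $\structureOmegaOne$, so by~(b) $G \models \vShapeDocument_1$, hence $G \models \vShapeDocument_2$; and since $\structureOmegaMain \models \delta(\tau(\vShapeDocument_2))$, (a) pins the interpretation of $\hasshapePredicate$ over $\vShapeDocument_2$'s names to $\structureOmegaTwo$, so~(b) applied to $\vShapeDocument_2$ yields $\structureOmegaMain \models \tau(\vShapeDocument_2)$. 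For the \emph{``if''} direction, assume the \SCL\ containment and take any $G$ with $G \models \vShapeDocument_1$; set $\structureOmegaMain = \structureOmegaG \cup \structureOmegaOne \cup \structureOmegaTwo$, which is a well-defined canonical structure precisely because disjointness of shape names makes $\structureOmegaOne$ and $\structureOmegaTwo$ populate $\hasshapePredicate$ over disjoint sets of shapes. By Theorem~\ref{theorem1}, $\structureOmegaG \cup \structureOmegaOne \models \tau(\vShapeDocument_1)$, and adjoining $\structureOmegaTwo$ changes nothing as $\tau(\vShapeDocument_1)$ never mentions $\vShapeDocument_2$'s names, while $\structureOmegaMain \models \delta(\tau(\vShapeDocument_2))$ holds by construction of $\structureOmegaTwo$; the assumed containment then gives $\structureOmegaMain \models \tau(\vShapeDocument_2)$, and~(b) applied to $\vShapeDocument_2$ gives $G \models \vShapeDocument_2$. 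The finite-model case is handled by the same constructions, which preserve finiteness of the graph part in both directions.

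The step I expect to be the main obstacle is making facts (a) and (b) fully rigorous: re-running enough of the (appendix-level) proof of Theorem~\ref{theorem1} to establish that, given $G$, the sentence $\delta(\tau(\vShapeDocument))$ determines the interpretation of $\hasshapePredicate$ on $\vShapeDocument$'s names exactly as the shape definition model (this is where non-recursiveness of \namedshapedefinition s is indispensable), and that the remaining conjuncts $\tau^{t}(\vShapeDocument)$ then correspond exactly to $G$ validating the targeted shapes of $\vShapeDocument$. The accompanying bookkeeping --- checking that $\structureOmegaOne$ and $\structureOmegaTwo$ genuinely do not interact because $\vShapeDocument_1$ and $\vShapeDocument_2$ share no shape names --- is routine, but is also the conceptual crux of the statement, since it is precisely what lets us repair the failed naive reduction by conjoining $\delta(\tau(\vShapeDocument_2))$ on the left.
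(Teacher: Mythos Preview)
Your proposal is correct and follows the same architecture as the paper: build the joint structure $\structureOmegaG \cup \structureOmegaOne \cup \structureOmegaTwo$ and exploit disjointness of shape names so that the two $\hasshapePredicate$ interpretations do not interfere. The paper's $\Leftarrow$ is argued by contraposition where you argue directly, but the content is identical.

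The one substantive difference is in the $\Rightarrow$ direction. The paper's proof starts from ``let $G$ be any graph that validates $\vShapeDocument_1$'' and exhibits a specific canonical model of both sides; it never explicitly starts from an \emph{arbitrary} canonical model of $\tau(\vShapeDocument_1) \wedge \delta(\tau(\vShapeDocument_2))$, which is what the containment statement literally requires. Your fact~(a)---uniqueness of the $\hasshapePredicate$ interpretation over $\vShapeDocument$'s names once $\delta(\tau(\vShapeDocument))$ is imposed---is exactly what fills that gap: it shows that any canonical model of the left-hand side must already have its $\hasshapePredicate$ relation equal to $\structureOmegaOne \cup \structureOmegaTwo$ on the relevant names, hence is (up to the irrelevant part of $\hasshapePredicate$) of the form the paper constructs. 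So your version is not a different route but a more rigorous rendering of the same one, and your identification of~(a) as the point needing care is well placed.
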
 
\begin{proof} 

($\Rightarrow$)\ Let $\vShapeDocument_1 \subseteq \vShapeDocument_2$. If $\vShapeDocument_1$ is not satisfiable the theorem holds. If $\vShapeDocument_1$ is satisfiable, let $G$ be any graph that validates $\vShapeDocument_1$, and thus $\vShapeDocument_2$. It holds that $\structureOmegaG \cup \structureOmegaOne$ models $\tau(\vShapeDocument_1)$ per Sec.~\ref{sec:sat}, and $\structureOmegaG \cup \structureOmegaTwo$ models $\tau(\vShapeDocument_2)$. 
It is easy to see that if $\structureOmegaG \cup \structureOmegaOne$ models $\tau(\vShapeDocument_1)$ the union of another \hasshapePredicate interpretation over a disjoint set of shape names, i.e., $\structureOmegaG \cup \structureOmegaOne \cup \structureOmegaTwo$ also models $\tau(\vShapeDocument_1)$. Similarly $\structureOmegaG \cup \structureOmegaOne \cup \structureOmegaTwo$ models $\tau(\vShapeDocument_2)$ as well. 

($\Leftarrow$) If $\vShapeDocument_1$ is not contained in $\vShapeDocument_2$, then there is a graph $G$ that models $\vShapeDocument_1$ but not $\vShapeDocument_2$.
Thus, $\structureOmegaG \cup \structureOmegaOne$ models $\tau(\vShapeDocument_1)$ but $\structureOmegaG \cup \structureOmegaTwo$ does not model $\tau(\vShapeDocument_2)$. So we have that $\structureOmegaG \cup \structureOmegaOne \cup \structureOmegaTwo$ models $\tau(\vShapeDocument_1) \cup \delta(\tau(\vShapeDocument_2))$  but not $\tau(\vShapeDocument_2)$.
\end{proof}    

Since our grammar is not closed under negation we cannot trivially reduce (finite) \SCL\  containment to (finite) \SCL\ satisfiability. Nevertheless, all positive (decidability and complexity) results are obtained by exhibiting inclusion of some \SCL\ fragment into a particular (extension of a) fragment of first-order logic already studied in the literature that is closed under negation. Thus we can always solve the (finite) \SCL\ containment problem for sentences $\phi_1 \subseteq \phi_2$ by deciding (finite) unsatisfiability of a sentence $\phi_1 \wedge \neg \phi_2$.  
Dually, the unsatisfiability of an \SCL\ sentence $\phi$ is equivalent to $\phi \subseteq \bot$. Hence, containment and unsatisfiability have the same complexity. 

\subsection{Filter Axiomatization} \label{sec:filters}

Decidability of \SCL\ satisfiability depends on the decidability of filters. In this section we present a decidable axiomatization that allows us to treat some filters as simple relations instead of interpreted ones. In particular, we do not consider \sh{pattern} which supports complex regular expressions, and the \sh{lessThanOrEquals} or \sh{lessThan} that are binary relations (the \lO\ and \lO' components of our grammar). All other features defined as filters in Sec.~\ref{sec:language} are represented by monadic relations $F(\vx)$ of the \SCL\ grammar.  

The actual problem imposed by filters w.r.t.\ deciding satisfiability and containment is that each combination of filters might be satisfied by a limited number of elements (zero, if the combination is unsatisfiable). For example, 
the number of elements of datatype boolean is two, the number of elements that are literals is infinite and the number of elements of datatype integer that are greater than 0 and lesser than 5 is four.  

Let a \emph{filter combination} $\mathds{F}(\vx)$ denote a conjunction of atoms of the form $\vx = \conc$, $\vx \neq \conc$, $F(\vx)$ or $\neg F(\vx)$, where \conc\ is a constant and $F$ is a filter predicate. 
Given a filter combination, it is possible to compute the number of elements that can satisfy it. Let $\gamma$ be the function from filter combinations to naturals returning this number. The computation of $\gamma(\mathds{F}(\vx))$ for the monadic filters we consider is trivial as it boils down to determining: (1) the lexical space and compatibility of datatypes and node types (including those implied by language tag and order constraints); (2) the cardinality of intervals defined by order or string-length constraints; and (3) simple RDF-specific restrictions, e.g., the fact that each node has at most one datatype and language tag. Combinations of the previous three points are equally computable. 
Let $\mathds{F}^{\varphi}$ be the set of filter combinations that can be constructed with the filters and constants occurring in a sentence $\varphi$.  
The filter axiomatization $\axiomatisation{\varphi}$ of a sentence $\varphi$ is
the following conjunction (conjuncts where $\gamma(\mathds{F}(\vx))$ is infinite
are trivially simplified to $\top$).
\[
\axiomatisation{\varphi} =  \bigwedge_{\mathds{F}(\vx) \in \mathds{F}^{\varphi}} \exists^{ \leq \gamma(\mathds{F}(\vx))} \vx . \; \mathds{F}(\vx)
\]

\begin{theorem} \label{thm:filters}
An \SCL\ sentence $\phi$ is satisfiable on a canonical model iff $\phi \wedge \axiomatisation{\phi}$ is satisfiable on an uninterpreted model.
Containment $\phi_1 \subseteq \phi_2$ of two \SCL\ sentences on all canonical models holds iff $\phi_1 \wedge \axiomatisation{\phi_1 \wedge \phi_2} \subseteq \phi_2$ holds on all uninterpreted models.
\end{theorem}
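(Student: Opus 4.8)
The plan is to prove the two directions of the satisfiability equivalence separately, and then derive the containment claim from it. The forward direction is immediate: if $\Omega$ is a canonical model with $\Omega \models \phi$, then, because a canonical model has all RDF terms as its domain and interprets every monadic filter by its true extension, the number of elements satisfying any filter combination $\mathds{F}(\vx)$ is exactly $\gamma(\mathds{F}(\vx))$ — this is precisely the quantity $\gamma$ computes. Hence $\Omega \models \exists^{\leq \gamma(\mathds{F}(\vx))}\vx.\,\mathds{F}(\vx)$ for every $\mathds{F}(\vx) \in \mathds{F}^{\phi}$ with finite $\gamma$ (those with infinite $\gamma$ being $\top$), so $\Omega \models \axiomatisation{\phi}$, and $\Omega$ itself witnesses satisfiability of $\phi \wedge \axiomatisation{\phi}$ over uninterpreted structures.

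For the converse, let $\Omega'$ be an uninterpreted model of $\phi \wedge \axiomatisation{\phi}$. I would attach to each domain element $e$ its \emph{filter type}, the maximal conjunction of literals $F(\vx)$, $\neg F(\vx)$, $\vx = \conc$, $\vx \neq \conc$ over the filters and constants of $\phi$ that holds at $e$. From $\Omega' \models \axiomatisation{\phi}$ two things follow: no element may realise a type $\mathds{F}$ with $\gamma(\mathds{F}) = 0$ — the conjunct $\exists^{\leq 0}\vx.\,\mathds{F}(\vx)$ would otherwise fail, which also forces distinct constant symbols of $\phi$ to have distinct denotations — and each type $\mathds{F}$ is realised by at most $\gamma(\mathds{F})$ elements, exactly the number of RDF terms realising $\mathds{F}$ in a canonical structure. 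As distinct maximal types pick out disjoint sets of RDF terms and a type containing $\vx = \conc$ is realised only by the RDF term $\conc$, one can then build an injective map $h$ from the domain of $\Omega'$ into the RDF terms that preserves filter types and maps the denotation of each constant $\conc$ to the RDF term $\conc$.

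I would then complete $h$ to a canonical structure $\Omega$: take the whole RDF-term domain, interpret the filters and the comparison relations truthfully, interpret every binary relation $R$ ($\isA$ included) so that $R^{\Omega}(a,b)$ holds iff $a$ and $b$ lie in the image of $h$ with $R^{\Omega'}(h^{-1}(a),h^{-1}(b))$, and extend \hasshapePredicate\ to the RDF terms outside the image of $h$ by processing the \namedshapedefinition s of $\phi$ in a topological order (legitimate since $\phi$ is well-formed, hence non-recursive). The key observation is that the $R$-successors and $R$-predecessors of an image element are themselves image elements, so the same holds for every element reachable along a $\pi$-path, including sequence compositions and transitive closures; a structural induction on $\psi$ (with a nested one on $\pi$) then gives $\Omega \models \psi(a) \Leftrightarrow \Omega' \models \psi(h^{-1}(a))$ for all $a$ in the image of $h$. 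Since every conjunct of $\phi$ is either $\psi(\conc)$ at a constant or a universally quantified implication/biconditional guarded by $\isA$, $R^{\pm}$ or \hasshapePredicate\ — guards that are false on the fresh RDF terms except for \hasshapePredicate, which was set consistently — this yields $\Omega \models \phi$.

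I expect the main obstacle to be this truth-transfer induction in the presence of the \namedshape\ biconditionals: one has to check that the freely chosen \hasshapePredicate\ values on the fresh RDF terms stay consistent with the defining formulas there (this is exactly where well-formedness and non-recursiveness are used) and that no sequence or transitive-closure path starting at an image element leaves the image. A lesser caveat is that the construction as described covers only the monadic filters — the binary order components \O\ and \O' and the \sh{pattern} filter are outside the scope of this axiomatization, so no $\sigma$-clause arises and $h$ need not be order-preserving; admitting them would additionally require $h$ to embed each comparison-type block order-isomorphically into the corresponding SPARQL order. For the containment statement I would note that $\phi_1 \not\subseteq \phi_2$ on canonical models iff $\phi_1 \wedge \neg\phi_2$ has a canonical model, rerun the construction above with $\axiomatisation{\phi_1 \wedge \phi_2}$ (the conjuncts needed to pin down every filter combination occurring in $\phi_1$ or $\phi_2$), and observe that $\neg\phi_2$ is an existential statement whose witness sits in the image of $h$ and survives the completion; hence a canonical counterexample exists iff $\phi_1 \wedge \axiomatisation{\phi_1 \wedge \phi_2} \wedge \neg\phi_2$ has an uninterpreted model, i.e.\ iff $\phi_1 \wedge \axiomatisation{\phi_1 \wedge \phi_2} \not\subseteq \phi_2$ over uninterpreted structures.
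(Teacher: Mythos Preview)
Your argument is correct and follows the same route as the paper's proof sketch: show that a canonical model automatically satisfies $\axiomatisation{\phi}$, and in the converse direction use the cardinality bounds from $\axiomatisation{\phi}$ to build an injection from the uninterpreted model into the RDF terms that preserves filter types, then transfer the binary relations along it and verify $\phi$ by structural induction. The paper phrases the converse more tersely---it speaks of ``replacing the interpretations of the monadic filter relations with their canonical ones'' and of an injection $\iota$ on the elements realising each filter combination---whereas you spell out explicitly the filter-type partition, the construction of the target structure on the full RDF-term domain, the topological processing of the \namedshapedefinition s to fix \hasshapePredicate\ on the fresh terms, and the observation that $\pi$-paths from image points cannot leave the image; these are exactly the details the paper's sketch leaves implicit, and your treatment of the containment claim via $\phi_1 \wedge \neg\phi_2$ matches the paper's ``similar'' remark.
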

\begin{proof}[Proof sketch]
  We focus on satisfiability, since the proof for containment is similar.
  First notice that every canonical model $I$ of $\varphi$ is necessarily a
  model of $\phi \wedge \axiomatisation{\phi}$.
  Indeed, by definition of the function $\gamma$, given a filter combination
  $\mathds{F}(\vx)$, there cannot be more than $\gamma(\mathds{F}(\vx))$
  elements satisfying $\mathds{F}(\vx)$, independently of the underlying
  canonical model.
  Thus, $I$ satisfies $\axiomatisation{\phi}$.
  Consider now a model $I$ of $\phi \wedge \axiomatisation{\phi}$ and let
  $I^{\star}$ be the structure obtained from $I$ by replacing the
  interpretations of the monadic filter relations with their canonical ones.
  Obviously, for any filter combination $\mathds{F}(\vx)$, there are exactly
  $\gamma(\mathds{F}(\vx))$ elements in $I^{\star}$ satisfying
  $\mathds{F}(\vx)$, since $I^{\star}$ is canonical.
  As a consequence, there exists a injection $\iota$ between the elements
  satisfying $\mathds{F}(\vx)$ in $I$ and those satisfying $\mathds{F}(\vx)$ in
  $I^{\star}$.
  At this point, one can prove that $I^{\star}$ satisfies $\varphi$.
  Indeed, every time a value $x$, satisfying $\mathds{F}(\vx)$ in $I$, is used
  to verify a subformula $\psi$ of $\varphi$ in $I$, one can use the value
  $\iota(x)$ to verify the same subformula $\psi$ in $I^{\star}$.
\end{proof}

\section{\SCL\ Satisfiability}\label{sec:scl-sat}

  \begin{figure}[t]
    \figfrg
    \caption{\label{fig:log} Decidability and complexity map of \SCL.
      Round (blue) and square (red) nodes denote decidable and undecidable
      fragments, respectively.
      Solid borders on nodes correspond to theorems in this paper, while dashed
      borders are implied results.
      Directed edges indicate inclusion of fragments, while bidirectional edges
      denote polynomial-time reducibility.
      Solid edges are preferred derivations to obtain tight results, while
      dotted ones leads to worst upper-bounds or model-theoretic properties.
      Finally, a light blue background indicates that the fragment enjoys the
      finite-model property, while those with a light red background do not
      satisfy this property.}
  \end{figure}

  In this section we embark on a detailed analysis of the satisfiability problem
  for different fragments of \SCL. Some of the proven and derived results are
  visualized in Figure~\ref{fig:log}.
  The decidability results are proved via embedding into known decidable
  (extensions of) fragments of first-order logic, while the undecidability ones
  are obtained through reductions from the domino problem.
  Since we are not considering filters explicitly, but via axiomatization, the
  only interpreted relations are the equality and the orderings.

  For the sake of space and readability, the map depicted in the figure is not
  complete \wrt two aspects.
  First, it misses few fragments whose decidability can be immediately derived
  via inclusion into a more expressive decidable fragment, e.g.,
  \Z\,\A\,\D\,\E\,\C\, or \S\,\Z\,\A\,\T\,\D.
  Second, the rest of the missing cases have an open decidability problem.
  In particular, while there are several decidable fragments containing the \T\
  feature we do not know any decidable fragment with the \lO\ or \lO' feature.
  Notice that the undecidability results making use of the last two are only
  applicable to generalized RDF.

  As first result, we show that the base language \X\ is already powerful enough
  to express properties writable by combining the \S, \Z, and \A\ features.
  In particular, the latter one does not augment the expressiveness when the \D\
  and \lO\ features are considered alone.

  \begin{theorem}
    \label{thm:eqv}
    There are semantic-preserving and polynomial-time finite-model-invariant
    satisfiability-preserving translations between the following \SCL\
    fragments:
    \begin{inparaenum}
      \item\label{thm:eqv(sza)}
        $\varnothing \equiv \S \equiv \Z \equiv \A \equiv \S\,\Z \equiv \S\,\A
        \equiv \Z\,\A \equiv \S\,\Z\,\A$;
      \item\label{thm:eqv(ad)}
        $\D \equiv \A\,\D$;
      \item\label{thm:eqv(ao)}
        $\lO \equiv \A\,\lO$;
      \item\label{thm:eqv(ado)}
        $\D\,\lO \equiv \A\,\D\,\lO$.
    \end{inparaenum}
  \end{theorem}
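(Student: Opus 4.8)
In every item one direction is trivial: the grammar of the left-hand fragment is syntactically included in that of the right-hand one, so the identity map is already a semantic-preserving, polynomial and finite-model-invariant translation of the left side into the right side of each $\equiv$. All the work is in the converse direction --- eliminating the extra path constructors --- and the plan is to do it by pushing a complex path expression $\pi$ through whatever context it occurs in and normalising it until only atomic relations $R^{\pm}$ remain.

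For the chain $\varnothing \equiv \S \equiv \Z \equiv \A \equiv \S\,\Z \equiv \S\,\A \equiv \Z\,\A \equiv \S\,\Z\,\A$, note that without $\D$, $\E$, $\lO$, $\C$ and $\T$ the non-terminal $\pi$ can occur only inside a subformula $\exists \vy\,.\,\pi(\vx,\vy)\wedge\psi(\vy)$ (the shortcut forms $\pi(\vx,\conc)$ and $\forall \vy\,.\,\pi(\vx,\vy)\rightarrow\psi(\vy)$ unfold to this). Proceeding bottom-up on the formula tree, I would rewrite each such subformula by recursion on the structure of $\pi$, using
\begin{align*}
  \exists \vy\,.\,(\exists \vz\,.\,\pi_1(\vx,\vz)\wedge\pi_2(\vz,\vy))\wedge\psi(\vy)
    &\;\equiv\; \exists \vz\,.\,\pi_1(\vx,\vz)\wedge\bigl(\exists \vy\,.\,\pi_2(\vz,\vy)\wedge\psi(\vy)\bigr), \\
  \exists \vy\,.\,(\vx = \vy\vee\pi_1(\vx,\vy))\wedge\psi(\vy)
    &\;\equiv\; \psi(\vx)\vee\bigl(\exists \vy\,.\,\pi_1(\vx,\vy)\wedge\psi(\vy)\bigr), \\
  \exists \vy\,.\,(\pi_1(\vx,\vy)\vee\pi_2(\vx,\vy))\wedge\psi(\vy)
    &\;\equiv\; \bigl(\exists \vy\,.\,\pi_1(\vx,\vy)\wedge\psi(\vy)\bigr)\vee\bigl(\exists \vy\,.\,\pi_2(\vx,\vy)\wedge\psi(\vy)\bigr).
\end{align*}
Each right-hand side stays in the base grammar --- disjunction is the defined shortcut~(i) and $\psi(\vx)$ is a substitution, not a forbidden atom $\vx = \vy$ --- so the output is a $\varnothing$-sentence logically equivalent to the input, which gives the semantic-preserving translation. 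The catch is size: the last two rewrites copy the ``continuation'' $\psi$, so alternations nested inside sequences make the naive procedure blow up exponentially. To obtain a polynomial translation I would instead read $\pi$ as a finite (star-free) automaton, take its Thompson graph --- a directed acyclic graph with $O(|\pi|)$ states --- introduce one fresh referenced shape per state, replace $\exists \vy\,.\,\pi(\vx,\vy)\wedge\psi(\vy)$ by \hasshape{\vx}{s_{q_0}}, name $\psi$ once inside the shape of the accepting state, and let every remaining shape reference its successor(s) along the single atomic relation labelling the corresponding edge. Acyclicity keeps these referenced shape definitions non-recursive, hence the result is well-formed, and the extra \hasshapePredicate facts are functionally determined over the same domain, so this encoding is polynomial, satisfiability-preserving and finite-model-invariant.

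For $\D \equiv \A\,\D$, $\lO \equiv \A\,\lO$ and $\D\,\lO \equiv \A\,\D\,\lO$ only $\A$ has to be removed, and since $\S$ and $\Z$ are absent every $\pi$ is a bare union $R_1^{\pm}\vee\dots\vee R_m^{\pm}$ of atomic relations. In the disjointness and order contexts this union sits under a negation or a universal quantifier, so it distributes into a conjunction with nothing to copy:
\begin{align*}
  \neg\exists \vy\,.\,\bigl({\textstyle\bigvee_i} R_i^{\pm}(\vx,\vy)\bigr)\wedge R(\vx,\vy)
    &\;\equiv\; {\textstyle\bigwedge_i}\,\bigl(\neg\exists \vy\,.\,R_i^{\pm}(\vx,\vy)\wedge R(\vx,\vy)\bigr), \\
  \forall \vy,\vz\,.\,\bigl({\textstyle\bigvee_i} R_i^{\pm}(\vx,\vy)\bigr)\wedge R(\vx,\vz)\rightarrow\sigma(\vy,\vz)
    &\;\equiv\; {\textstyle\bigwedge_i}\,\bigl(\forall \vy,\vz\,.\,R_i^{\pm}(\vx,\vy)\wedge R(\vx,\vz)\rightarrow\sigma(\vy,\vz)\bigr).
\end{align*}
The outputs lie in $\D$, $\lO$ and $\D\,\lO$ respectively, and the rewriting is clearly polynomial and semantic-preserving; with the trivial inclusions this settles the three equivalences. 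Since all of the above rewrites are purely structural and never touch the interpreted relations (equality and the orderings), equisatisfiability with SHACL and the filter axiomatization of Theorems~\ref{theorem1} and~\ref{thm:filters} carry over unchanged.

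The only genuinely delicate point is the polynomiality of the $\varnothing \equiv \S\,\Z\,\A$ item: one must resist the naive distribution and verify that the automaton-style encoding is faithful --- that it stays non-recursive (this is exactly where star-freeness, i.e.\ the absence of $\T$, is needed), that no excluded feature is reintroduced, and that the passage to and from the enriched models preserves finiteness. The remaining items are routine distributive simplifications with no such hazards.
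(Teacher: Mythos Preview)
Your proposal is correct and rests on the same distributive equivalences the paper uses: the three rewrites for \S, \Z, \A\ under the plain existential, plus the \A-distribution under the \D\ and \lO\ constructs. Where you diverge is in how polynomiality is secured for item~1. The paper goes Tseitin-style: it names every innermost quantifier body $\psi(\vy)$ with a fresh $\hasshape{\vy}{\sconst}$ atom and conjoins the corresponding referenced-shape definition, so that after this pass every body is a single atom and the subsequent \Z/\A\ rewrites duplicate only atoms. You instead decompose the path $\pi$ into Thompson states, one referenced shape per state, with $\psi$ named once at the accepting state. Both devices work; yours makes the non-recursion requirement transparent (it is exactly acyclicity of the star-free graph) and is arguably more robust when \S\ and \A\ interleave deeply inside a single path expression. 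One small slip: your closing claim that items~2--4 have ``no such hazards'' overlooks that \A\ also occurs in the plain existential context of \A\,\D, \A\,\lO\ and \A\,\D\,\lO, and there nested \A-existentials still duplicate $\psi$ under the naive rewrite; those occurrences need the same naming/automaton trick you deployed for item~1, and only the \D\ and \lO\ occurrences are genuinely copy-free.
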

  \begin{proof}
    To show the equivalences between the fourteen \SCL\ fragments mentioned in
    the thesis, we consider the following first-order formula equivalences that
    represent few distributive properties enjoyed by the \S, \Z, and \A\
    features \wrt some of the other language constructs.
    The verification of their correctness only requires the application of
    standard properties of Boolean connectives and first-order quantifiers.
    \begin{itemize}
      \item
        \textbf{[\S].}
        The sequence combination of two path formulas $\pi_{1}$ and $\pi_{2}$ in
        the body of an existential quantification is removed by nesting two
        quantifications, one for each $\pi_{i}$:
        \[
          \exists \vy \,.\, (\exists \vz \,.\, \pi_{1}(\vx, \vz) \wedge
          \pi_{2}(\vz, \vy)) \wedge \psi(\vy) \equiv \exists \vz \,.\,
          \pi_{1}(\vx, \vz) \wedge (\exists \vy \,.\, \pi_{2}(\vz, \vy) \wedge
          \psi(\vy)).
        \]
      \item
        \textbf{[\Z].}
        The \Z\ path construct can be removed from the body of an existential
        quantification on a free variable $\vx$ by verifying whether the formula
        $\psi$ in its scope is already satisfied by the value bound to $\vx$
        itself:
        \[
          \exists \vy \,.\, (\vx = \vy \vee \pi(\vx, \vy)) \wedge \psi(\vy)
          \equiv \psi(\vx) \vee \exists \vy \,.\, \pi(\vx, \vy) \wedge
          \psi(\vy).
        \]
      \item
        \textbf{[\A].}
        The removal of the \A\ path construct from the body of an existential
        quantifier or of the \D\ and \lO\ constructs can be done by exploiting
        the following equivalences:
        \[
        \hspace{-3.5em}
        \begin{footnotesize}
        \begin{aligned}
          & \exists \vy \,.\, (\pi_{1}(\vx, \vy) \vee \pi_{2}(\vx, \vy)) \wedge
            \psi(\vy) \equiv
            (\exists \vy \,.\, \pi_{1}(\vx, \vy) \wedge \psi(\vy)) \vee (\exists
            \vy \,.\, \pi_{2}(\vx, \vy) \wedge \psi(\vy)); \\
          & \neg \exists \vy .\, (\pi_{1}(\vx, \vy) \vee \pi_{2}(\vx, \vy))
            \wedge R(\vx, \vy) \equiv
            (\neg \exists \vy .\, \pi_{1}(\vx, \vy) \wedge R(\vx, \vy))
            \!\wedge\! (\neg \exists \vy .\, \pi_{2}(\vx, \vy) \wedge R(\vx,
            \vy)); \\
          & \forall \vy, \vz \,.\, (\pi_{1}(\vx, \vy) \vee \pi_{2}(\vx, \vy))
            \wedge R(\vx, \vz) \rightarrow \sigma(\vy, \vz) \equiv
            (\forall \vy, \vz \,.\, \pi_{1}(\vx, \vy) \wedge R(\vx, \vz)
            \rightarrow \sigma(\vy, \vz)) \\
          & \hphantom{\forall \vy, \vz \,.\, (\pi_{1}(\vx, \vy) \vee
            \pi_{2}(\vx, \vy)) \wedge R(\vx, \vz) \rightarrow \sigma(\vy, \vz)}
            \,\;\wedge (\forall \vy, \vz \,.\, \pi_{2}(\vx, \vy) \wedge R(\vx,
            \vz) \rightarrow \sigma(\vy, \vz)).
        \end{aligned}
        \end{footnotesize}
        \]
    \end{itemize}
    At this point, the equivalences between the fragments naturally follow by
    iteratively applying the discussed equivalences.
    \\\indent
    The removal of the \Z\ and \A\ constructs from an existential quantification
    might lead, however, to an exponential blow-up in the size of the formula
    due to the duplication of the body $\psi$ of the quantification.
    To obtain polynomial-time finite-model-invariant
    satisfiability-preserving translations, we first construct from the given
    sentence $\varphi$ a finite-model-invariant equisatisfiable sentence
    $\varphi^{\star}$. The latter has a linear size in the original one and all
    the bodies of its quantifications are just plain relations.
    Then, we apply the above described semantic-preserving translations to
    $\varphi^{\star}$ that, in the worst case, only leads to a doubling in the
    size.
    The sentence $\varphi^{\star}$ is obtained by iteratively applying to
    $\varphi$ the following two rewriting operations, until no complex formula
    appears in the scope of an existential quantification.
    Let $\psi'(\vx) = \exists \vy \,.\, \pi(\vx, \vy) \wedge \psi(\vy)$ be a
    subformula, where $\psi(\vy)$ does not contain quantifiers other than
    possibly those of the \S, \D, and \lO\ features.
    Then:
    \begin{inparaenum}[(i)]
      \item
        replace $\psi'(\vx)$ with $\exists \vy \,.\, \pi(\vx, \vy) \wedge
        \hasshape{\vy}{\sconst}$, where $\sconst$ is a fresh
        constant;
      \item
        conjoin the resulting sentence with $\forall \vx .\;
        \hasshape{\vx}{\sconst} \leftrightarrow \psi(\vx)$.
    \end{inparaenum}
    The two rewriting operations only lead to a constant increase of
    the size and are applied only a linear number of times. 
  \end{proof}

  It turns out that the base language \X\ resembles the description logic
  $\mathcal{ALC}$ extended with universal roles, inverse roles, and
  nominals~\cite{BCMNP03}.
  This resemblance is exploited as the key observation at the core of the
  following result.

  \begin{theorem}
    \label{thm:sat(sza)}
    All \SCL\ subfragments of \S\,\Z\,\A\ enjoy the finite-model property.
    Moreover, the satisfiability problem is ExpTime-complete.
  \end{theorem}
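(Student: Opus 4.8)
The plan is to collapse the whole family of fragments onto a single, well-studied description logic. By Theorem~\ref{thm:eqv}, every subfragment of \S\,\Z\,\A\ admits a polynomial-time, finite-model-invariant, satisfiability-preserving translation into the base language \X\ (and conversely \X\ is contained in each of them). Hence it suffices to prove the finite-model property and ExpTime-completeness for well-formed \X\ sentences: finite-model-invariance is exactly what transports the finite-model property along these translations; composing the translation \S\,\Z\,\A\ $\to$ \X\ with a decision procedure for \X\ yields the ExpTime upper bound for the largest fragment, hence for all of them; and any ExpTime lower bound for \X\ applies to every fragment in the list since \X\ is the smallest.

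For the upper bound and the finite-model property I would set up a polynomial, satisfiability- and finite-satisfiability-preserving translation, in both directions, between well-formed \X\ sentences and knowledge bases of $\mathcal{ALC}$ extended with nominals, inverse roles, and the universal role $u$ -- the logic anticipated by the remark preceding the theorem. The translation is compositional: each monadic filter name $F$ and each referenced shape $\sconst$ becomes an atomic concept, each binary relation $R$ (including \isA) an atomic role with $R^{-}$ its inverse, and each constant $\conc$ a nominal $\{\conc\}$, adding $\{\conc_i\} \sqcap \{\conc_j\} \sqsubseteq \bot$ for distinct constants to capture the unique-name assumption built into canonical models; a base-grammar formula $\psi(\vx)$ goes to a concept $C_{\psi}$ by the obvious clauses ($\vx = \conc$ to $\{\conc\}$, $\hasshape{\vx}{\sconst}$ to its atomic concept, $\neg$ and $\wedge$ to $\neg$ and $\sqcap$, $\exists \vy . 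R^{\pm}(\vx,\vy) \wedge \psi(\vy)$ to $\exists R^{\pm}.C_{\psi}$); and the five top-level forms of $\varphi$ become, respectively, a concept assertion $\conc : C_{\psi}$, a GCI $\exists \isA . \{\conc\} \sqsubseteq C_{\psi}$, a GCI $\exists R^{\pm}.\top \sqsubseteq C_{\psi}$, a union of axiom sets, and an acyclic concept equation $A_{\sconst} \equiv C_{\psi}$ (acyclic because $\varphi$ is well-formed), the universal role being needed to internalise node targets and the global cardinality bounds below. Interpreted filters are handled through Theorem~\ref{thm:filters}: each conjunct $\exists^{\leq \gamma}\vx.\,\mathds{F}(\vx)$ of $\axiomatisation{\varphi}$ becomes a global cardinality bound on $u$ restricted to a Boolean combination of atomic filter concepts and nominals, expressible without leaving the ExpTime regime. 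One then verifies, by induction on $\varphi$ and $\psi$, that (finite) canonical models of $\varphi \wedge \axiomatisation{\varphi}$ correspond to (finite) models of the resulting knowledge base and vice versa. Since this description logic enjoys the finite-model property and has ExpTime-complete knowledge-base satisfiability~\cite{BCMNP03}, both claims follow for \X.

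For ExpTime-hardness of \X\ -- hence of every fragment considered -- I would reduce from concept satisfiability of $\mathcal{ALC}$ with respect to a general (possibly cyclic) TBox, which is ExpTime-complete. Concept names become classes and role names relations; each GCI $C \sqsubseteq D$ becomes a shape with class target \texttt{:Thing} whose constraint encodes $\neg C \sqcup D$ in the base grammar (using $\isA(\vx, A)$ for concept names and $\exists R . (\cdots)$ for existentials); an additional base-grammar shape propagates membership in the fresh class \texttt{:Thing} along every role, so the lack of a primitive top concept and of unrestricted GCIs in $\varphi$ is no obstacle; and satisfiability of the input concept becomes a node target on a fresh constant. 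The key point is that the ``recursion'' that makes $\mathcal{ALC}$-with-TBox ExpTime-hard is generated on the data side -- infinite or exponentially branching models -- and is therefore compatible with the acyclicity of referenced shape definitions.

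I expect the main obstacle to be the faithful two-way correspondence between canonical \SCL\ models and description-logic interpretations: aligning the domain of RDF terms, the unique-name assumption, the role \isA and constants-as-nominals, the biconditional shape definitions against acyclic concept equations, and the target forms against GCIs and assertions -- and, most delicately, the filter axiomatization, so that the cardinality bounds it introduces are at once expressible in the target logic, preserved in both directions on finite models, and harmless for the ExpTime bound, all while keeping the translation polynomial.
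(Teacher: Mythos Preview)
Your overall strategy coincides with the paper's: collapse all \S\,\Z\,\A\ subfragments to \X\ via Theorem~\ref{thm:eqv}, then identify \X\ with an $\mathcal{ALC}$-family description logic (inverse roles, nominals, universal role) and import the ExpTime bounds from there. The concept/role/nominal correspondence you sketch is exactly the one the paper uses.

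Two points of divergence are worth noting. First, the paper does \emph{not} extract the finite-model property from the description logic; it obtains it by a forward reference to Theorem~\ref{thm:sat(szatd)}, where \S\,\Z\,\A\,\D\ is embedded in the unary-negation fragment of FOL, which enjoys the finite-model property. Your route (FMP of $\mathcal{ALCOI}$ with universal role, via two-variable FOL) is equally valid and arguably more self-contained. Second, for hardness the paper simply observes a linear interreduction between \X\ (minus the outer universal sentences) and $\mathcal{ALC}$ with inverse roles and nominals, and quotes ExpTime-hardness of that logic; your reduction from $\mathcal{ALC}$ with a general TBox via a relativizing class \texttt{:Thing} also works, but is more laborious than necessary.

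The one place where your plan overshoots is the filter handling. In Section~\ref{sec:scl-sat} the paper explicitly treats filters as uninterpreted monadic predicates, so Theorem~\ref{thm:sat(sza)} is stated and proved without invoking Theorem~\ref{thm:filters}. Pulling the axiomatisation $\axiomatisation{\varphi}$ into this proof is not only unnecessary, it is risky: the paper itself remarks (just after Theorem~\ref{thm:sat(zadec)}) that $\axiomatisation{\varphi}$ may be exponential in $\varphi$, so your claim that the resulting global cardinality bounds are ``expressible without leaving the ExpTime regime'' would need a separate argument that you do not supply. Drop the filter detour and your proof plan matches the paper's, modulo the two harmless variations above.
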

  \begin{proof}
    The finite-model property follows from the fact that
    Theorem~\ref{thm:sat(szatd)} states the same property for the subsuming
    language \S\ \Z\ \A\ \D.
    As far as the satisfiability problem is concerned, thanks to
    Item~\ref{thm:eqv(sza)} of Theorem~\ref{thm:eqv}, we can focus on the base
    language \X.
    It can be observed that the description logic $\mathcal{ALC}$ extended with
    inverse roles and nominals~\cite{BCMNP03} and the language \X\ deprived of
    the universal quantifications at the level of sentences are linearly
    interreducible.
    Indeed, every existential modality $\exists R.C$ (\emph{resp.}, $\exists
    R^{-}.C$) precisely corresponds to the \SCL\ construct $\exists \vy \,.\,
    R(\vx, \vy) \wedge \psi_{C}(\vy)$ (\emph{resp.}, $\exists \vy \,.\,
    R^{-}(\vx, \vy) \wedge \psi_{C}(\vy)$), where $\psi_{C}(\vy)$
    represents the concept $C$.
    Moreover, every nominal $n$ corresponds to the equality construct $\vx =
    \conc_{n}$, where a natural bijection between nominals and constants is
    considered.
    Since the aforementioned description logic has an ExpTime-complete
    satisfiability problem~\cite{SCh91,DM00}, it holds that the same problem for
    all subfragments of \S\,\Z\,\A\ is ExpTime-hard.
    Completeness follows by observing that the universal quantifications at the
    level of sentences can be encoded in the further extension of
    $\mathcal{ALC}$ with the universal roles~\cite{SCh91}, which has an
    ExpTime-complete satisfiability problem~\cite{SV01}.
  \end{proof}

  To derive properties of the \Z\,\A\,\D\,\E\ fragment, together with its
  sub-fragments (two of those -- \E\ and \A\,\E\ -- are shown in
  Figure~\ref{fig:log}), we leverage on the syntactic embedding into the
  two-variable fragment of first-order logic.

  \begin{theorem}
    \label{thm:sat(zade)}
    The \Z\,\A\,\D\,\E\; fragment of \SCL\ enjoys the finite-model property.
    Moreover, the associated satisfiability problem is solvable in NExpTime.
  \end{theorem}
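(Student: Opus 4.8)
The plan is to embed the \Z\,\A\,\D\,\E\ fragment, taken together with its filter axiomatization, into $\mathcal{C}^{2}$ --- the two-variable fragment of first-order logic with counting quantifiers --- and then to read off both claims from the known metatheory of $\mathcal{C}^{2}$: its satisfiability and its finite-satisfiability problems are decidable in NExpTime (Pratt-Hartmann), even with numerical thresholds written in binary, and $\mathcal{C}^{2}$ is closed under negation, so containment $\phi_1 \subseteq \phi_2$ reduces to unsatisfiability of $\phi_1 \wedge \neg\phi_2$. First, by Theorem~\ref{thm:filters} it suffices to decide (finite) satisfiability of $\phi \wedge \axiomatisation{\phi}$ on uninterpreted models, for $\phi$ a well-formed \SCL\ sentence built in \Z\,\A\,\D\,\E; and by Theorem~\ref{theorem1} a satisfying structure whose active domain is finite corresponds to a finite validating RDF graph. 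So the task reduces to giving a polynomial-time, finite-satisfiability-preserving translation of $\phi \wedge \axiomatisation{\phi}$ into $\mathcal{C}^{2}$.

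The heart of that translation is that, since \Z\,\A\,\D\,\E\ contains neither \S\ nor \T, every path subformula $\pi(\vx,\vy)$ of Definition~\ref{def:syn} is merely a disjunction of atoms of the form $R^{\pm}(\vx,\vy)$ and $\vx = \vy$, and so never introduces the third variable $\vz$. Hence the existential production $\exists \vy.\, \pi(\vx,\vy) \wedge \psi(\vy)$, the \D-production $\neg\exists \vy.\, \pi(\vx,\vy) \wedge R(\vx,\vy)$, and the \E-production $\forall \vy.\, \pi(\vx,\vy) \leftrightarrow R(\vx,\vy)$ are all two-variable; nested existentials inside $\psi(\vy)$ are brought back to two variables by the usual variable-reuse trick (a subformula of $\psi(\vy)$ quantifying a fresh variable may reuse $\vx$, which does not occur free in it). The sentence-level shapes of $\varphi$ --- the $\forall$-guards of the class, subject-of, and object-of targets, the node-target body $\psi(\conc)$, and the referenced-shape definitions $\forall \vx.\, \hasshape{\vx}{\sconst} \leftrightarrow \psi(\vx)$ --- are likewise two-variable, and the constants $\conc,\sconst$ together with the atoms $\vx = \conc$ are dealt with by the standard simulation of constants in $\mathcal{C}^{2}$. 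The only counting that survives the translation is that of $\axiomatisation{\phi}$, whose conjuncts have the very restricted form $\exists^{\leq\gamma}\vx.\,\mathds{F}(\vx)$ with $\mathds{F}$ quantifier-free and unary, that is, upper bounds on the cardinalities of the $1$-types pinned down by the filters and constants of $\phi$.

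Two points then require attention, and the second is where the real work lies. First, $\axiomatisation{\phi}$ must not be taken literally as a conjunction over the exponentially many combinations of $\mathds{F}^{\phi}$: it is equivalent to the conjunction of its non-redundant conjuncts --- one cardinality bound per datatype of finite lexical space, per interval cut out by order bounds, and per string-length bound occurring in $\phi$, plus the pairwise clauses forbidding two datatypes or two language tags on a single node --- which is of polynomial size, so the $\mathcal{C}^{2}$ instance is polynomial in $|\phi|$ and Pratt-Hartmann's NExpTime bound transfers. Second, the finite-model property must be argued for the fragment even though $\mathcal{C}^{2}$ as a whole lacks it: the point is that \Z\,\A\,\D\,\E\ cannot express functionality of a binary relation (this would need \C\ or a third variable), so the only counting introduced --- upper bounds on $1$-types --- cannot encode an infinity axiom. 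Thus the translated sentences live in the extension of two-variable logic by global upper bounds on $1$-type cardinalities, a fragment that inherits Mortimer's finite-model property: from any model one extracts a finite model of the two-variable core by the standard small-model construction, keeping all (boundedly many) realizations of each finitely-bounded $1$-type so that every conjunct of $\axiomatisation{\phi}$ stays satisfied; equivalently one invokes the NExpTime decidability of finite satisfiability of $\mathcal{C}^{2}$ together with the coincidence of satisfiability and finite satisfiability on these sentences. Pulling a finite model back through Theorems~\ref{thm:filters} and~\ref{theorem1}, both of which preserve the relevant notion of finiteness, yields the finite-model property for \Z\,\A\,\D\,\E\ and the NExpTime upper bound for its satisfiability and --- via negation --- its containment problem. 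I expect this second point --- isolating precisely why the benign unary upper-bound counting from $\axiomatisation{\phi}$ cannot force infinite models, while keeping the whole $\mathcal{C}^{2}$ instance polynomial --- to be the main obstacle.
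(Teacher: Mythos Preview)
Your core observation is exactly the paper's: since the fragment excludes \S\ (and hence \T) and \lO, the path subgrammar $\pi(\vx,\vy)$ never introduces the third variable $\vz$, and nested existentials are handled by variable reuse, so every \Z\,\A\,\D\,\E\ formula lives in two-variable first-order logic. The paper's proof stops there: it embeds directly into plain $\mathrm{FO}^{2}$ and reads off both the finite-model property and the NExpTime bound from Gr\"adel--Kolaitis--Vardi.

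Where you diverge is in dragging the filter axiomatization $\axiomatisation{\phi}$ into the argument and consequently landing in $\mathcal{C}^{2}$ rather than $\mathrm{FO}^{2}$. This is unnecessary for the theorem as stated: \SCL\ satisfiability is defined over arbitrary first-order structures, with the monadic filter symbols $F(\vx)$ uninterpreted (the paper says so explicitly at the start of Section~\ref{sec:scl-sat}). So there is no counting in the fragment at all, the embedding is into $\mathrm{FO}^{2}$, and Mortimer's finite-model property applies directly --- your ``second point'', which you flag as the main obstacle, simply does not arise. The detour also forces you into two claims the paper does not make and that are not obviously true: that $\axiomatisation{\phi}$ can be replaced by a polynomial-size equivalent (the paper in fact remarks, just after Theorem~\ref{thm:sat(zadec)}, that $\axiomatisation{\phi}$ may be exponential and yields only a coarse upper bound), and that the resulting $\mathcal{C}^{2}$ sentences enjoy the finite-model property because the counting is ``only'' on $1$-types (plausible, but your sketch is not a proof). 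Drop the filter layer and both difficulties vanish.
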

  \begin{proof}
    Via inspection of the \SCL\ grammar one can notice that, by avoiding the
    \S\ and \lO\ features of the language it is only possible to write formulas
    with at most two free variables~\cite{Mor75}.
    For this reason, every \Z\,\A\,\D\,\E\ formula belongs to the two-variable
    fragment of first-order logic which is known to enjoy both the finite-model
    property and a NExpTime satisfiability problem~\cite{GKV97}.
  \end{proof}

  The embedding used in the previous theorem can be generalized when the \C\
  feature is added to the picture.
  However, this additional expressive power does not come without a price
  since the complexity increases and the finite-model property is lost.

  \begin{theorem}
    \label{thm:sat(zadec)}
    The \C\, fragment of \SCL\ does not enjoy the finite-model property and has
    a NExpTime-hard satisfiability problem.
    Nevertheless, the finite and unrestricted satisfiability problems for
    \Z\,\A\,\D\,\E\,\C\, are NExpTime-complete.
  \end{theorem}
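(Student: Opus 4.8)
The statement has three parts: failure of the finite-model property already for \C; NExpTime-hardness already for \C; and a matching NExpTime upper bound for the larger fragment \Z\,\A\,\D\,\E\,\C, for both finite and unrestricted satisfiability. I would prove them in this order, so that the lower bound for \Z\,\A\,\D\,\E\,\C\ is inherited from the \C-fragment result via the inclusion \C \,\subseteq\, \Z\,\A\,\D\,\E\,\C.

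\textbf{Loss of the finite-model property.} The plan is to exhibit a single sentence of the \C\ fragment that is satisfiable but only on infinite structures; since \C\ has no filters, canonical and uninterpreted models coincide for it. Using a binary relation $R$, a node constant $a$ and a class constant $B$, let $\chi$ be
\begin{align*}
  \chi \grameq\; & \isA(a,B) \wedge \neg \bigl(\exists \vy .\, R^{-}(a, \vy) \wedge \top\bigr) \;\wedge{} \\
  & \bigl(\forall \vx .\, \isA(\vx, B) \rightarrow \neg \bigl(\exists^{\geq 2} \vy .\, R^{-}(\vx, \vy) \wedge \top\bigr) \wedge \exists \vy .\, R(\vx, \vy) \wedge \isA(\vy, B)\bigr),
\end{align*}
where $\isA(\cdot,B)$ is shortcut (ii) for $\exists \vy .\, \isA(\cdot,\vy) \wedge \vy = B$. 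This is a well-formed sentence (no \hasshapePredicate\ atoms) built from the base language plus the single \C\ construct $\exists^{\geq 2}$. Intuitively: $a$ is a $B$-instance with no incoming $R$-edge, while every $B$-instance has at most one incoming $R$-edge and at least one outgoing $R$-edge to another $B$-instance (the ``at most one predecessor'' conjunct is the only one requiring \C). The natural numbers with $R$ the successor relation, $a \mapsto 0$, and $\isA$ relating every natural to the element denoting $B$, witness satisfiability. Conversely, in any model, iterating the $B$-successor requirement from $a$ yields an infinite $R$-walk $a = c_0, c_1, c_2, \dots$ staying inside $B$; a minimal-coincidence argument shows the $c_i$ are pairwise distinct, since a first coincidence $c_i = c_j$ with $i < j$ would, via ``at most one $R$-predecessor'' applied to $c_i = c_j$, force either an $R$-predecessor of $a$ or a strictly earlier coincidence $c_{i-1} = c_{j-1}$, both impossible. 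Hence $\chi$ has no finite model.

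\textbf{NExpTime-hardness of \C.} Here the plan is to reduce from the $2^{n}\times 2^{n}$ torus tiling problem. The idea is standard: $2n$ unary predicates binary-encode the coordinates of a grid cell, a nominal (the constant/equality construct of the base language) marks the origin, the cell concept is a class whose extension is populated by the cell-creating existentials, and the guarded universal quantifiers of the base language impose all global constraints over that class. The essential device is that inverse roles together with the \C-counting constraint ``at most one $R$-predecessor'' force the intended horizontal and vertical successor maps to be functional and injective, so that any model embeds the whole exponential torus, over which tile-adjacency is local; this is exactly the machinery behind the NExpTime-hardness proofs for description logics such as $\mathcal{ALCQIO}$, which the \C\ fragment subsumes. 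Because the torus is finite, the constructed sentence has a finite model precisely when it is satisfiable, so both finite and unrestricted satisfiability are NExpTime-hard for \C, hence for \Z\,\A\,\D\,\E\,\C. I expect this grid-enforcing gadget to be the main obstacle of the whole proof: one must verify that nominals, inverse roles, and \C-counting really cooperate to pin down $2^{2n}$ pairwise distinct cells using only the available \SCL\ constructs.

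\textbf{NExpTime membership of \Z\,\A\,\D\,\E\,\C.} First apply Theorem~\ref{thm:filters} to pass from canonical to uninterpreted models, conjoining $\axiomatisation{\varphi}$; note $\axiomatisation{\varphi}$ is itself a conjunction of one-variable counting sentences $\exists^{\leq k}\vx .\, \mathds{F}(\vx)$ with $k$ in binary. Now, reasoning as in the proof of Theorem~\ref{thm:sat(zade)}: since this fragment excludes the \S, \T\ and \lO\ features, every $\pi$- and $\psi$-subformula is ordinary first-order and mentions at most the two variables $\vx,\vy$, and the only new ingredient of \C\ is counting quantifiers over those same two variables. Hence every \Z\,\A\,\D\,\E\,\C\ sentence, together with its filter axiomatization, lies in the two-variable fragment of first-order logic with counting quantifiers $\mathcal{C}^{2}$ (with counting bounds written in binary; each $\hasshape{\cdot}{\sconst}$ and each nominal is treated as a fresh unary predicate, the latter constrained to a singleton). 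Since both the finite and the unrestricted satisfiability problems of $\mathcal{C}^{2}$ are NExpTime-complete (Pratt-Hartmann), we obtain the NExpTime upper bounds in both cases; together with the lower bounds from the previous part this yields NExpTime-completeness, and the well-known failure of the finite-model property for $\mathcal{C}^{2}$ is consistent with the first part of the statement.
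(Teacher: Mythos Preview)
Your proposal is essentially correct and follows the same three-part strategy as the paper: an infinite-chain sentence in \C\ to refute the finite-model property, hardness inherited from $\mathcal{ALCQIO}$ (the paper simply cites the known NExpTime-hardness of that description logic rather than redoing the torus-tiling reduction you sketch, but you yourself note the equivalence), and membership by embedding \Z\,\A\,\D\,\E\,\C\ into $\mathcal{C}^{2}$ and invoking Pratt-Hartmann.

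One caveat on your upper-bound argument: the results of Section~\ref{sec:scl-sat} are stated for \SCL\ with filters treated as \emph{uninterpreted} unary predicates, so the detour through Theorem~\ref{thm:filters} is not needed for the theorem as stated. If you do conjoin $\axiomatisation{\varphi}$, beware that it ranges over all filter combinations and can be exponential in $|\varphi|$; the paper flags exactly this point right after the proof, noting that the approach then yields only a ``(potentially) coarse upper bound'' rather than NExpTime. Dropping that step, your embedding into $\mathcal{C}^{2}$ is polynomial and the NExpTime bound goes through as you intend.
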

   \begin{proof}
    As for the proof of Theorem~\ref{thm:sat(zade)}, one can observe that every
    \Z\,\A\,\D\,\E\,\C\ formula belongs to the two-variable fragment of
    first-order logic extended with counting quantifiers.
    Such a logic does not enjoy the finite-model
    property~\cite{GOR97}, since it syntactically contains a sentence that
    encodes the existence of an injective non-surjective function from the
    domain of the model to itself.
    The \C\ fragment of \SCL\ allows us to express the same property via the
    following sentence $\varphi$, thus implying the first part of the
    thesis:
    \[
    \begin{aligned}
      \varphi \grameq\;
        & \isA(0, \conc) \wedge \neg \exists \vx \,.\, R^{-}(0, \vx) \wedge
          \forall \vx \,.\, \isA(\vx, \conc) \rightarrow \psi(\vx); \\
      \psi(\vx) \grameq\;
        & \exists^{= 1} \vy \,.\, (R(\vx, \vy) \wedge \isA(\vy, \conc)) \wedge
          \neg \exists^{\geq 2} \vy \,.\, R^{-}(\vx, \vy).
    \end{aligned}
    \]
    Intuitively, the first two conjuncts of $\varphi$ force every model of the
    sentence to contain an element $0$ that does not have any $R$-predecessor
    and that is related to $\conc$ in the $\isA$ relation.
    In other words, $0$ is not contained in the image of the relation $R$.
    The third conjunct of $\varphi$ ensures that every element related to
    $\conc$ \wrt $\isA$ has exactly one $R$-successor, also related to $\conc$
    in the same way, and at most one $R$-predecessor.
    Thus, a model of $\varphi$ must contain an infinite chain of elements
    pairwise connected by the functional relation $R$.


    By generalizing the proof of Theorem~\ref{thm:sat(sza)}, one can notice that
    the \C\ fragment of \SCL\ semantically subsumes the description logic
    $\mathcal{ALC}$ extended with inverse roles, nominals, and cardinality
    restrictions~\cite{BCMNP03}.
    Indeed, every qualified cardinality restriction $(\geq n\, R.C)$
    (\emph{resp.}, $(\leq n\, R.C)$) precisely corresponds to the \SCL\
    construct $\exists^{\geq n} \vy \,.\, R(\vx, \vy) \wedge \psi_{C}(\vy)$
    (\emph{resp.}, $\neg \exists^{\geq n + 1} \vy \,.\, R(\vx, \vy) \wedge
    \psi_{C}(\vy)$), where $\psi_{C}(\vy)$ represents the concept $C$.
    Thus, the hardness result for \C\ follows by recalling that the specific
    $\mathcal{ALC}$ language has a NExpTime-hard satisfiability
    problem~\cite{Tob00,Lut04}.
    On the positive side, however, the extension of the two-variable fragment of
    first-order logic with counting quantifiers has decidable finite and
    unrestricted satisfiability problems.
    Specifically, both can be solved in NExpTime, even in the case of binary
    encoding of the cardinality constants~\cite{Pra05,Pra10}.
    Hence, the second part of the thesis follows as well.
  \end{proof}

  Thanks to the axiomatization of (the subset of) filters given in
  Sec.~\ref{sec:filters}, it is immediate to see that the \Z \A \D \E \C\
  fragment extended with these filters is decidable as well.
  Indeed, although the sentence $\axiomatisation{\varphi}$ is not immediately
  expressible in \SCL\ it belongs to the two-variable fragment of FOL extended
  with counting quantifiers.
  Notice however that, since $\axiomatisation{\varphi}$ might be exponential in
  the size of $\varphi$, this approach only leads to a (potentially) coarse upper
  bound. 
  An attempt to prove a tight complexity result might exploit the SMT-like
  approach described in~\cite{ABM19} for the LTL part of Strategy Logic.
  Indeed, one could think to extend the decision procedure for the above FOL fragment
  in such a way that the filter axiomatization is implicitly considered during the
  check for satisfiability.

  For the \S\ \Z\ \A\ \D\ fragment, we obtain model-theoretic and complexity
  results via an embedding into the unary-negation fragment of first-order
  logic.
  When the \T\ feature is considered, the same embedding can be adapted to
  rewrite \S\ \Z\ \A\ \T\ \D\ into the extension of the above first-order
  fragment with regular path expressions.
  Unfortunately, as for the addition of the \C\ feature to \Z\,\A\,\D\,\E, we
  pay the price of losing the finite-model property.
  In this case, however, no increase of the complexity of the satisfiability
  problem occurs.

   \begin{theorem}
    \label{thm:sat(szatd)}
    The \S\,\Z\,\A\,\D\, fragment of \SCL\ enjoys the finite-model property.
    The \S\,\T\,\D\, fragment does not enjoy the finite-model property.
    However, the finite and unrestricted satisfiability problems for
    \S\,\Z\,\A\,\T\,\D\, are solvable in 2ExpTime.
  \end{theorem}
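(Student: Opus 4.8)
The plan is to establish all three claims through embeddings of the relevant \SCL\ fragments into the \emph{unary-negation fragment} of first-order logic (UNFO) — and its regular-path extension — together with one explicit sentence having only infinite models. Recall that UNFO is obtained from atomic formulas by conjunction, disjunction and existential quantification, allowing negation only on subformulas with at most one free variable, and that UNFO has the finite-model property and a 2ExpTime-complete satisfiability problem (finite and unrestricted alike). The first step is to check that every construct of \S\,\Z\,\A\,\D\ respects this discipline, once we treat each $\hasshape{\vx}{\sconst}$ as a monadic predicate indexed by the shape name $\sconst$ (and handle the constants in the standard way). The constraint subformulas $\psi(\vx)$ have a single free variable, so $\neg\psi(\vx)$ is legal; every path subformula $\pi(\vx,\vy)$ is purely positive, using only atoms $R^{\pm}$, existentials, disjunctions, and the equality atom $\vx=\vy$ coming from \Z; the disjointness constructor $\neg\exists\vy.\,\pi(\vx,\vy)\wedge R(\vx,\vy)$ is the negation of a formula whose only free variable is $\vx$; and each sentence-level form $\forall\vx.\,\isA(\vx,\conc)\rightarrow\psi(\vx)$, $\forall\vx,\vy.\,R^{\pm}(\vx,\vy)\rightarrow\psi(\vx)$, and $\forall\vx.\,\hasshape{\vx}{\sconst}\leftrightarrow\psi(\vx)$ unfolds — using $A\rightarrow B\equiv\neg(A\wedge\neg B)$ and $A\leftrightarrow B\equiv(A\rightarrow B)\wedge(B\rightarrow A)$ — into negations of \emph{sentences} together with negations of one-free-variable subformulas, all admissible in UNFO. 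This yields a linear translation of \S\,\Z\,\A\,\D\ into UNFO, and the finite-model property of \S\,\Z\,\A\,\D\ follows from that of UNFO.

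For \S\,\Z\,\A\,\T\,\D\ I would run the same translation but target the extension of UNFO with \emph{regular path expressions}. The key point is that the path sublanguage generated from $R^{\pm}$ by \S\ (concatenation), \A\ (union), \Z\ (optionality) and \T\ (Kleene plus) is exactly the class of regular expressions over the relations and their inverses, so each $\pi(\vx,\vy)$ may be replaced, with no blow-up, by a regular-path atom $E_{\pi}(\vx,\vy)$. The disjointness constructor still lives in the fragment, since $\exists\vy.\,E_{\pi}(\vx,\vy)\wedge R(\vx,\vy)$ has the single free variable $\vx$ and hence may be negated, and the sentence-level rewritings are unchanged. As the extension of UNFO with regular path expressions has finite and unrestricted satisfiability both decidable in 2ExpTime, and the translation is linear, we obtain the claimed 2ExpTime upper bounds for \S\,\Z\,\A\,\T\,\D, and \emph{a fortiori} for \S\,\Z\,\A\,\D.

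For the failure of the finite-model property of \S\,\T\,\D\ I would exhibit a satisfiable sentence all of whose models are infinite. Fix a binary relation $R$, a class $\conc$ and a constant $\conc_{0}$, and take the conjunction of $\isA(\conc_{0},\conc)$, of $\forall\vx.\,\isA(\vx,\conc)\rightarrow\exists\vy.\,R(\vx,\vy)\wedge\isA(\vy,\conc)$, and of the disjointness constraint $\forall\vx.\,\isA(\vx,\conc)\rightarrow\neg\exists\vy.\,(R^{-}(\vx,\vy))^{\star}\wedge R(\vx,\vy)$, where $(R^{-}(\vx,\vy))^{\star}$ is the transitive closure of the inverse of $R$. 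The first two conjuncts force an infinite forward $R$-chain through $\conc$ starting at $\conc_{0}$, while the third forbids any $\conc$-element $\vx$ from having an $R$-successor $\vy$ that reaches $\vx$ back along $R$ — equivalently, no $\conc$-element lies on an $R$-cycle — so every model is infinite; and the naturals $\mathbb{N}$ with $R$ the successor relation, $\conc$ the whole domain and $\conc_{0}=0$ form a model. Since this sentence uses only the \T\ and \D\ features on top of the base language, it witnesses the claim.

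\textbf{The main obstacle.} The delicate, routine-but-error-prone part is checking that \emph{every} construct — especially the asymmetric disjointness atom and the three sentence-level quantifier patterns after their $\rightarrow$/$\leftrightarrow$ unfoldings, and the intersection of a regular-path atom with a relation atom under a single negation in the \T\ case — keeps the ``negation only on at most one free variable'' invariant; getting the orientation of the inverse/transitive-closure path right in the non-finitely-satisfiable sentence is a second minor pitfall. The one genuinely external ingredient is the 2ExpTime bound for \emph{finite} satisfiability of UNFO with regular path expressions: that logic, unlike plain UNFO, lacks the finite-model property (as the \S\,\T\,\D\ sentence already shows for the weaker \SCL\ fragment), so the finite case does not reduce to the unrestricted one and must be invoked from the literature directly.
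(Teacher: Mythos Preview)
Your proposal is correct and follows essentially the same route as the paper: embed \S\,\Z\,\A\,\D\ into UNFO (hence finite-model property), embed \S\,\Z\,\A\,\T\,\D\ into UNFO with regular path expressions (hence 2ExpTime for both finite and unrestricted satisfiability, citing the literature for the finite case since the logic lacks the finite-model property), and exhibit an explicit sentence with only infinite models for the negative claim. The one noticeable variation is your infinite-model witness: you use the path $(R^{-}(\vx,\vy))^{\star}$ inside the \D\ construct, whereas the paper uses the sequence $\exists\vz.\,R^{-}(\vx,\vz)\wedge (R^{-}(\vz,\vy))^{\star}$. Your version is simpler and in fact lives already in the \T\,\D\ fragment, which is a slightly stronger observation than the stated claim about \S\,\T\,\D; both sentences correctly express ``$\vx$ lies on no $R$-cycle'' and force the successor chain to be acyclic and hence infinite.
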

  \begin{proof}
    By inspecting the \SCL\ grammar, one can notice that every formula that does
    not make use of the \T, \E, \lO, and \C\ constructs can be translated into
    the standard first-order logic syntax, with conjunctions and disjunctions as
    unique binary Boolean connectives, where negation is only applied to
    formulas with at most one free variable.
    For this reason, every \S\,\Z\,\A\,\D\ formula semantically belongs to the
    unary-negation fragment of first-order logic, which is known to enjoy the
    finite-model property~\cite{CS11,CS13}.

    Similarly every \S\,\Z\,\A\,\T\,\D\ formula belongs to the unary-negation
    fragment of first-order logic extended with regular path
    expressions~\cite{JLMS18}.
    Indeed, the grammar rule $\pi(\vx, \vy)$ of \SCL, precisely resembles the
    way the regular path expressions are constructed in the considered logic,
    when one avoids the test construct.
    Unfortunately, as for the two-variable fragment with counting quantifiers, this logic also fails to satisfy the finite-model property since it is able
    to encode the existence of a non-terminating path without cycles.
    The \S\,\T\,\D\ fragment of \SCL\ allows us to express the same property, as
    described in the following.
    First of all, consider the \S\,\T\ path formula $\pi(\vx, \vy) \grameq
    \exists \vz \,.\, (R^{-}(\vx, \vz) \wedge (R^{-}(\vz, \vy))^{\star})$.
    Obviously, $\pi(\vx, \vy)$ holds between two elements $\vx$ and $\vy$ of a
    model if and only if there exists a non-trivial $R$-path (of arbitrary
    positive length) that, starting in $\vy$, leads to $\vx$.
    Now, by writing the \S\,\T\,\D\ formula $\psi(\vx) \grameq \neg \exists \vy
    \,.\, (\pi(\vx, \vy) \wedge R(\vx, \vy))$, we express the fact that an
    element $\vx$ does not belong to any $R$-cycle since, otherwise, there
    would be an $R$-successor $\vy$ able to reach $\vx$ itself.
    Thus, by ensuring that every element in the model has an $R$-successor, but
    does not belong to any $R$-cycle, we can enforce the existence of an
    infinite $R$-path.
    The \S\,\T\,\D\ sentence $\varphi$ expresses exactly this property:
    \[
      \varphi \grameq \isA(0, \conc) \wedge \forall \vx \,.\, \isA(\vx, \conc)
      \rightarrow (\psi(\vx) \wedge \exists \vy \,.\, (R(\vx, \vy) \wedge
      \isA(\vy, \conc))).
    \]


    On the positive side, however, the extension of the unary-negation fragment
    of first-order logic with arbitrary transitive relations or, more generally,
    with regular path expressions has decidable finite and unrestricted
    satisfiability problems.
    Specifically, both can be solved in 2ExpTime~\cite{ABBV16,JLMS18,DK19}.
  \end{proof}


  At this point, it is interesting to observe that the \lO\ feature allows us to
  express a very weak form of counting restriction which is, however, powerful
  enough to lose the finite-model property. For the proof of the following we refer to our appendix.

  \begin{theorem}
    \label{thm:fmp(oeop)}
    \SCL\ fragments \lO\ and \E\,\lO' do not satisfy the finite-model property.
  \end{theorem}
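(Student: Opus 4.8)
The plan is to prove, for both fragments at once, that satisfiability does not imply finite satisfiability, by exhibiting a single sentence $\varphi$ that lives already in \lO' --- the weakest of the three fragments involved, since \lO' is contained both in \lO\ and in \E\,\lO' --- and that is satisfiable on a canonical model but has no finite model. The mechanism is the familiar one (cf.\ the proof of Theorem~\ref{thm:sat(zadec)}): combine a constraint forcing every ``marked'' element to have an $R$-successor with an order constraint that forbids $R$-cycles, so that on a finite domain iterating the choice of successor would eventually close a cycle and contradict the irreflexivity of $<$.

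Concretely I would use a marker class $\conc$, kept non-empty by a node target $\isA(0,\conc)$, and impose on every $\conc$-element $\vx$ the conjunction of the base-language constraint $\exists \vy \,.\, R(\vx,\vy) \wedge \isA(\vy,\conc)$ and the \lO' constraint $\forall \vy, \vz \,.\, R^{-}(\vx,\vy) \wedge R(\vx,\vz) \rightarrow \vy < \vz$, i.e.\ the sentence
\[
  \varphi \grameq \isA(0,\conc) \wedge \forall \vx \,.\, \isA(\vx,\conc)
  \rightarrow \bigl( (\exists \vy \,.\, R(\vx,\vy) \wedge \isA(\vy,\conc))
  \wedge (\forall \vy, \vz \,.\, R^{-}(\vx,\vy) \wedge R(\vx,\vz) \rightarrow
  \vy < \vz) \bigr).
\]
The order atom here is an instance of the \lO/\lO' construct with path $\pi(\vx,\vy)=R^{-}(\vx,\vy)$, atomic relation $R$, and the \emph{non-inverted} comparison $\sigma(\vy,\vz)=\vy<\vz$; it states that whenever $\vx$ lies on an $R$-path its $R$-predecessor is strictly below its $R$-successor. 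I would witness satisfiability by the canonical model over the RDF terms that interprets $\isA(\cdot,\conc)$ as the non-negative integers, $R$ as the successor relation on those, and $<$ as the canonical integer order, which plainly validates $\varphi$.

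For the absence of a finite model, suppose $I \models \varphi$ with $I$ finite. Starting from $0$ and iterating the existential conjunct yields an infinite sequence $a_0,a_1,\dots$ of $\conc$-elements with $R(a_k,a_{k+1})$; by finiteness $a_i=a_j$ for some $i<j$, so putting $n=j-i$ and $c_m=a_{i+(m\bmod n)}$ we obtain an $R$-cycle $c_0\to c_1\to\dots\to c_{n-1}\to c_0$ of $\conc$-elements. Applying the \lO' conjunct at each $c_m$, with $R$-predecessor $c_{m-1}$ and $R$-successor $c_{m+1}$ (indices modulo $n$), gives $c_{m-1}<c_{m+1}$; walking around the cycle in steps of two and using that the canonical $<$ is irreflexive and transitive (being a disjoint union of strict total orders) produces $c<c$ for some $c$, a contradiction. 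The step I expect to be delicate is exactly this last combinatorial argument: it has to be uniform over $R$-cycles of every length, the degenerate cases being self-loops and $2$-cycles, where the order atom with coinciding predecessor and successor already forces $c<c$ directly; phrasing the composition as ``step twice around the cycle until you return'' collapses all lengths to the same irreflexivity contradiction. The only other thing to check is the fragment bookkeeping --- that the successor and target parts stay within the base language and that $\sigma$ is never inverted --- so that $\varphi \in \lO'$ and the construction therefore settles \lO' , \lO , and \E\,\lO' simultaneously; the \E\ feature is not actually needed here, although a variant trading the inverse-path trick for the functional behaviour of \sh{equals} would make use of it.
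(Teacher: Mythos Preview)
Your argument is correct and actually proves slightly more than the theorem asks: the sentence already lies in \lO', so \lO' itself fails the finite-model property, and both \lO\ and \E\,\lO' follow at once as superfragments. The cycle argument is sound --- stepping twice around any $R$-cycle of length $n$ gives the chain $c_{0} < c_{2} < \cdots < c_{2n \bmod n} = c_{0}$ after $n$ uses of transitivity, and the degenerate lengths $1$ and $2$ collapse directly to $c < c$ as you note.

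The paper takes a different and somewhat heavier route. Mirroring its \C\ proof, it encodes an injective non-surjective function: an \lO' constraint $\forall y,z.\,F(x,y)\wedge F(x,z)\rightarrow y\leq z$ makes a relation $F$ functional, and then, for the \lO\ fragment, the inverse $F^{-}$ is tied to an auxiliary functional relation $G$ via \emph{both} $y\leq z$ and $z\leq y$ (so genuinely using the inverted order of \lO, not just \lO'); for \E\,\lO' this last step is replaced by the \E\ equivalence $\forall y.\,F^{-}(x,y)\leftrightarrow G(x,y)$. Thus the paper needs two separate constructions, each exercising the distinguishing feature of its fragment, whereas your single cycle-forbidding sentence is more elementary, uniform, and sharper --- it needs neither \E\ nor the inverted comparison. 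The paper's version does have the expository merit of showing explicitly how \E\ can stand in for the inverted order, a trade-off that recurs elsewhere in the decidability map.
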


  In the remaining part of this section, we show the undecidability of the
  satisfiability problem for five fragments of \SCL\ through a semi-conservative
  reduction from the standard domino problem~\cite{Wan61,Ber66,Rob71}, whose
  solution is known to be $\Pi_{0}^{1}\text{-complete}$.
  A $\mathbb{N} \times \mathbb{N}$ tiling system $(\TileSet, \HorizontalRel,
  \VerticalRel)$ is a structure built on a non-empty set $\TileSet$ of domino
  types, a.k.a. tiles, and two horizontal and vertical matching relations
  $\HorizontalRel, \VerticalRel \subseteq \TileSet \times \TileSet$.
  The domino problem asks for a compatible tiling of the first quadrant
  $\mathbb{N} \times \mathbb{N}$ of the plane, \ie, a solution mapping $\eth
  \colon \mathbb{N} \times \mathbb{N} \to \TileSet$ such that, for all $\vx, \vy
  \in \mathbb{N}$, both $(\eth(\vx, \vy), \eth(\vx + 1, \vy)) \in
  \HorizontalRel$ and $(\eth(\vx, \vy), \eth(\vx, \vy + 1)) \in \VerticalRel$
  hold true.

  \begin{theorem}
    \label{thm:sat(und)}
    The satisfiability problems of the \S\,\lO, \S\,\A\,\C, \S\,\E\,\C,
    \S\,\E\,\lO', and \S\,\Z\,\A\,\E\, fragments of \SCL\ are undecidable.
  \end{theorem}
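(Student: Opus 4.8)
The plan is to reduce the $\mathbb{N}\times\mathbb{N}$ domino problem to satisfiability, one reduction per fragment, building from a tiling system $(\TileSet,\HorizontalRel,\VerticalRel)$ an \SCL\ sentence $\varphi_{\TileSet}$ that is satisfiable on a canonical model iff $(\TileSet,\HorizontalRel,\VerticalRel)$ admits a solution. All five sentences share a common skeleton expressible already in the base language \X. Fix fresh binary relations $\Horizontal,\Vertical$ (``go right''/``go up''), a fresh class $\conc$ (the grid points), a constant $0$ (the origin) and one fresh class $\conc_{t}$ per tile $t\in\TileSet$; reading $\isA(\vx,\conc_{t})$ via the shortcut $\exists\vy.\,\isA(\vx,\vy)\wedge\vy=\conc_{t}$. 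The skeleton asserts: (i) $\isA(0,\conc)$; (ii) $\forall\vx.\,\isA(\vx,\conc)\rightarrow(\exists\vy.\,\Horizontal(\vx,\vy)\wedge\isA(\vy,\conc))\wedge(\exists\vy.\,\Vertical(\vx,\vy)\wedge\isA(\vy,\conc))$, so the reachable part is infinite; (iii) every $\conc$-node carries exactly one tile class, via $\bigvee_{t}\isA(\vx,\conc_{t})$ and pairwise exclusions $\neg(\isA(\vx,\conc_{t})\wedge\isA(\vx,\conc_{t'}))$; and (iv) the matching conditions, written for each $t$ as $\isA(\vx,\conc_{t})\rightarrow\forall\vy.\,\Horizontal(\vx,\vy)\rightarrow\bigvee_{(t,t')\in\HorizontalRel}\isA(\vy,\conc_{t'})$ and symmetrically for $\Vertical$. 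What (i)--(iv) do \emph{not} yet guarantee is that the reachable structure is a genuine grid: we still need (a) \emph{determinism}, that $\Horizontal$ and $\Vertical$ are functional on $\conc$, and (b) \emph{confluence}, that ``right then up'' and ``up then right'' reach the same node.

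The feature-specific part of each $\varphi_{\TileSet}$ supplies (a) and (b); every fragment has \S, which names the composites $\pi_{\Horizontal\Vertical}(\vx,\vy)\grameq\exists\vz.\,\Horizontal(\vx,\vz)\wedge\Vertical(\vz,\vy)$ and $\pi_{\Vertical\Horizontal}(\vx,\vy)\grameq\exists\vz.\,\Vertical(\vx,\vz)\wedge\Horizontal(\vz,\vy)$.
\begin{itemize}
  \item \S\,\lO: determinism from antisymmetry of the interpreted order, since the \lO\ conjunct $\forall\vy,\vz.\,\Horizontal(\vx,\vy)\wedge\Horizontal(\vx,\vz)\rightarrow\vy\leq\vz$ is symmetric in $\vy,\vz$ (likewise for $\Vertical$); confluence with a fresh $\Diagonal$ by forcing $\exists\vy.\,\Diagonal(\vx,\vy)$ and, via two \lO\ conjuncts using $\leq$ and $\leq^{-}$, that every $\pi_{\Horizontal\Vertical}$-successor and every $\Diagonal$-successor coincide, and likewise for $\pi_{\Vertical\Horizontal}$.
  \item \S\,\E\,\lO': determinism again by the order-antisymmetry trick (which needs no inversion); confluence as the \sh{equals} role-value maps $\forall\vy.\,\pi_{\Horizontal\Vertical}(\vx,\vy)\leftrightarrow\Diagonal(\vx,\vy)$ and $\forall\vy.\,\pi_{\Vertical\Horizontal}(\vx,\vy)\leftrightarrow\Diagonal(\vx,\vy)$.
  \item \S\,\E\,\C: same \E-based confluence; determinism from $\neg\exists^{\geq 2}\vy.\,\Horizontal(\vx,\vy)$ and $\neg\exists^{\geq 2}\vy.\,\Vertical(\vx,\vy)$.
  \item \S\,\A\,\C: determinism from the same \C\ conjunct; confluence from the \A-conjunct $\neg\exists^{\geq 2}\vy.\,(\pi_{\Horizontal\Vertical}(\vx,\vy)\vee\pi_{\Vertical\Horizontal}(\vx,\vy))$, which — as determinism makes both composites single-valued and nonempty — forces their two values to coincide.
  \item \S\,\Z\,\A\,\E: lacking counting, disjointness and order, both (a) and (b) must be extracted from iterated \sh{equals} constraints alone, exploiting the slack of \Z\ (reflexive closure of a path) and \A\ to build the auxiliary relations pinning down single-valued successors and the grid closure.
\end{itemize}

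Correctness then splits as usual. For soundness, from a solution $\eth$ take a canonical structure whose domain is the set of RDF terms, using a fresh supply of terms to name the points $(i,j)$ (integer literals in the \lO/\lO'\ cases), with $0\mapsto(0,0)$, $\Horizontal=\{((i,j),(i{+}1,j))\}$, $\Vertical=\{((i,j),(i,j{+}1))\}$, $\conc$ the whole grid, $\conc_{t}=\eth^{-1}(t)$, $\Diagonal=\{((i,j),(i{+}1,j{+}1))\}$ where used, all other relations empty, and filters/orderings canonical; one checks it satisfies $\varphi_{\TileSet}$ (the order and counting conjuncts hold trivially there). For completeness, given any canonical $I\models\varphi_{\TileSet}$, restrict to the $\conc$-nodes reachable from $0$ and set $\eth(i,j)$ to be the tile class of the node obtained from $0$ by $i$ $\Horizontal$-steps then $j$ $\Vertical$-steps; determinism makes this a well-defined node, and an induction on $i+j$ using confluence shows the value is independent of the interleaving of the steps, so $\eth$ is a total function, which (iii)--(iv) make a legal tiling. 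Hence $\varphi_{\TileSet}$ is satisfiable iff $(\TileSet,\HorizontalRel,\VerticalRel)$ tiles $\mathbb{N}\times\mathbb{N}$; since the latter is $\Pi_{0}^{1}$-complete, satisfiability of each of the five fragments is undecidable. (For \S\,\lO\ and \S\,\E\,\lO'\ the compared elements are literals appearing in subject position, so generalized RDF is needed; and these sentences have no finite model, as reflected in Fig.~\ref{fig:log}.)

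The main obstacle is exactly enforcing (a) together with (b) in such weak languages, and in particular the \S\,\Z\,\A\,\E\ case: without counting one cannot simply write ``at most one $\Horizontal$-successor'', without \D\ or a second \E-target one cannot directly compare two composite paths to a third relation in the convenient way used for the other four fragments, so a deterministic grid must be reconstructed purely from nested \sh{equals} role-value maps, and the delicate point is to verify that the resulting sentence has precisely the intended class of models (neither admitting spurious non-grid models that still satisfy it, nor ruling out the intended grid).
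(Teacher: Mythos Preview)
Your reductions for \S\,\lO, \S\,\A\,\C, \S\,\E\,\C, and \S\,\E\,\lO' are correct and close to the paper's, with one difference worth noting: you insist on enforcing (a) functionality of $\Horizontal$ and $\Vertical$ in addition to (b) confluence, whereas the paper never makes $\Horizontal,\Vertical$ functional. It only forces the \emph{diagonal} to be single-valued (via $\neg\exists^{\geq 2}\vy.\,\pi_{\Diagonal}(\vx,\vy)$ in \S\,\A\,\C, via a functional auxiliary $\Diagonal$ together with $\pi_{\Horizontal\Vertical}\equiv\Diagonal\equiv\pi_{\Vertical\Horizontal}$ in the others). That already suffices to extract a tiling: pick arbitrary $\Horizontal$- and $\Vertical$-successors and the confluence property alone guarantees path-independence. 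Your stronger requirement is harmless in those four fragments since you can express it, but it is not needed.

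For \S\,\Z\,\A\,\E, however, there is a genuine gap. You correctly identify this as the hard case and describe what has to be achieved, but you give no construction; ``iterated \sh{equals} constraints \ldots\ exploiting \Z\ and \A'' is a statement of intent, not an encoding. Moreover, your framing in terms of (a)$+$(b) may be misleading here: it is not clear that $\Horizontal,\Vertical$ can be made functional in \S\,\Z\,\A\,\E\ at all, and the paper does not attempt it. Instead, the paper (following Gr\"adel's undecidability argument for the guarded fragment with transitivity) introduces two colours $\DiagonalX{0},\DiagonalX{1}$ with $\Diagonal=\DiagonalX{0}\cup\DiagonalX{1}$ (an \A-inside-\E\ constraint), forces every node to be a source in only one colour while its $\DiagonalX{i}$-successors are sources in the other (so colours alternate along any $\Diagonal$-path), and then, using \Z\ and \S\ inside \E, defines $\Closure{i}$ as the reflexive-symmetric closure of $\DiagonalX{i}$ and forces each $\Closure{i}$ to be transitive. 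A short combinatorial argument then shows that two distinct $\Diagonal$-successors of the same node would create an $\Closure{i}$-triangle contradicting the alternation, hence $\Diagonal$ is functional. Finally $\forall\vy.\,\pi_{\Diagonal}(\vx,\vy)\leftrightarrow\Diagonal(\vx,\vy)$ (again \A\ inside \E) yields confluence. This is the missing idea in your proposal, and without it the \S\,\Z\,\A\,\E\ case is not proved.
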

  \begin{proof}
    The main idea behind the proof is to embed a tiling system into a model of a
    particular \SCL\ sentence that is satisfiable if and only if the tiling
    system allows for an admissible tiling.
    The hardest part in the reduction consists in the definition of a
    satisfiable sentence all of whose models homomorphically contain the
    infinite grid of the tiling problem. In other words, this sentence should admit an
    infinite square grid graph as a minor of the model unwinding.
    Given that, the remaining part of the reduction can be completed in the base
    language \X.

    Independently of the fragment we are proving undecidable, consider the
    sentence
    \[
      \varphi \grameq \bigvee_{\tile \in \TileSet} \isA(0, \tile) \wedge
      \bigwedge_{\tile \in \TileSet} \forall \vx \,.\, \isA(\vx, \tile)
      \rightarrow (\psi_{T}^{\tile}(\vx) \wedge \psi_{G}(\vx)).
    \]
    Intuitively, this first states the existence of the point $0$, the origin of
    the grid, labeled by some tile and then ensures the fact that all points
    $\vx$, that are labeled by some tile $\tile$, need to satisfy the two
    formulas $\psi_{T}^{\tile}(\vx)$ and $\psi_{G}(\vx)$.
    The first formula is used to ensure the admissibility of the tiling, while the
    second one forces the model to embed a grid.

    \[
    \hspace{-1.5em}
    \begin{aligned}
      \psi_{T}^{\tile}(\vx) \grameq\;
        & \bigwedge_{\tileprime \in \TileSet}^{\tileprime \neq \tile} \neg
          \isA(\vx, \tileprime) \\
      \wedge\;
        & \left(\forall \vy \,.\, \Horizontal(\vx, \vy) \rightarrow
          \bigvee_{(\tile, \tileprime) \in \HorizontalRel} \isA(\vy, \tileprime)
          \right) \wedge \left(\forall \vy \,.\, \Vertical(\vx,\vy) \rightarrow
          \bigvee_{(\tile, \tileprime) \in \VerticalRel} \isA(\vy, \tileprime)
          \right)
    \end{aligned}
    \]
    The first conjunct of the \X\ formula $\psi_{T}^{\tile}(\vx)$ verifies that
    the point $\vx$ is labeled by no other tile than $\tile$.
    The second part, instead, ensures that the points $\vy$ on the right or
    above of $\vx$ are labeled by some tile $\tile'$ which is compatible with
    $\tile$, \wrt the constraints imposed by the horizontal $\HorizontalRel$ and
    vertical $\VerticalRel$ relations, respectively.

    At this point, we can focus on the formula $\psi_{G}(\vx)$ defined as
    follows:
    \[
      \psi_{G}(\vx) \grameq (\exists \vy \,.\, \Horizontal(\vx, \vy)) \wedge
      (\exists \vy \,.\, \Vertical(\vx, \vy)) \wedge \gamma(\vx).
    \]
    The first two conjuncts guarantee the existence of an horizontal and
    vertical adjacent of the point $\vx$, while the subformula $\gamma(\vx)$,
    whose definition depends on the considered fragment of \SCL, needs to
    enforce the fact that $\vx$ is the origin of a square. That is, that going
    horizontally and then vertically or, vice versa, vertically and then
    horizontally the same point is reached.
    In order to do this, we make use of the two \S\ path formulas
    $\pi_{\Horizontal\Vertical}(\vx, \vy) \grameq\exists \vz \,.\,
    (\Horizontal(\vx, \vz) \wedge \Vertical(\vz, \vy))$ and
    $\pi_{\Vertical\Horizontal}(\vx, \vy) \grameq \exists \vz \,.\,
    (\Vertical(\vx, \vz) \wedge \Horizontal(\vz, \vy))$.
    In some cases, we also use the \S\,\A\ path formula $\pi_{\Diagonal}(\vx,
    \vy) \grameq \pi_{\Horizontal\Vertical}(\vx, \vy) \vee
    \pi_{\Vertical\Horizontal}(\vx, \vy)$ combining the previous ones.
    We now proceed by a case analysis on the specific fragments.
    \begin{itemize}
      \item
        \textbf{[\S\,\O]}
        By assuming the existence of a non-empty relation $\Diagonal$ connecting
        a point with its opposite in the square, \ie, the diagonal point, we
        can say that all points reachable through $\pi_{\Horizontal\Vertical}$
        or $\pi_{\Vertical\Horizontal}$ are, actually, the same unique point:
        \[
        \hspace{-2em}
        \begin{aligned}
          \gamma(\vx) \grameq\;
            & \exists \vy \,.\, \Diagonal(\vx, \vy) \\
          \wedge\;
            & \forall \vy, \vz .\, \pi_{\Horizontal\Vertical}(\vx, \vy) \wedge
              \Diagonal(\vx, \vz) \rightarrow \vy \leq \vz \;\wedge\; \forall
              \vy, \vz .\, \pi_{\Horizontal\Vertical}(\vx, \vy) \wedge
              \Diagonal(\vx, \vz) \rightarrow \vy \geq \vz \\
          \wedge\;
            & \forall \vy, \vz .\, \pi_{\Vertical\Horizontal}(\vx, \vy) \wedge
              \Diagonal(\vx, \vz) \rightarrow \vy \leq \vz \;\wedge\; \forall
              \vy, \vz .\, \pi_{\Vertical\Horizontal}(\vx, \vy) \wedge
              \Diagonal(\vx, \vz) \rightarrow \vy \geq \vz.
        \end{aligned}
        \]
        The \S\,\O\ formula $\gamma(\vx)$ ensures that
        relation $\Diagonal$ is non-empty and functional and that all points
        reachable via $\pi_{\Horizontal\Vertical}$ or
        $\pi_{\Vertical\Horizontal}$ are necessarily the one reachable through
        $\Diagonal$.
      \item
        \textbf{[\S\,\A\,\C]}
        By applying a counting quantifier to the formula $\pi_{\Diagonal}$
        encoding the union of the points reachable through
        $\pi_{\Horizontal\Vertical}$ or $\pi_{\Vertical\Horizontal}$, we can
        ensure the existence of a single diagonal point:
        $\gamma(\vx) \grameq \neg \exists^{\geq 2} \vy \,.\,
        \pi_{\Diagonal}(\vx, \vy)$.
      \item
        \textbf{[\S\,\E\,\C]}
        As for the \S\,\O\ fragment, here we use a diagonal relation
        $\Diagonal$, which needs to contain all and only the points reachable
        via $\pi_{\Horizontal\Vertical}$ or $\pi_{\Vertical\Horizontal}$.
        By means of the counting quantifier, we ensure its functionality:
        \[
        \begin{aligned}
          \gamma(\vx) \grameq
             \neg \exists^{\geq 2} \vy . \, \Diagonal(\vx, \vy) 
          \wedge
             \forall \vy .  \pi_{\Horizontal\Vertical}(\vx, \vy)
              \leftrightarrow \Diagonal(\vx, \vy) \wedge
             \forall \vy .  \pi_{\Vertical\Horizontal}(\vx, \vy)
              \leftrightarrow \Diagonal(\vx, \vy).
        \end{aligned}
        \]
      \item
        \textbf{[\S\,\E\,\O']}
        This case is similar to the previous one, where the functionality of
        $\Diagonal$ is obtained by means of the \O\ construct:
        \[
        \begin{aligned}
          \gamma(\vx) \grameq\;
            & \forall \vy, \vz \,.\, \Diagonal(\vx, \vy) \wedge \Diagonal(\vx,
              \vz) \rightarrow \vy \leq \vz \\
          \wedge\;
            & \forall \vy \,.\, \pi_{\Horizontal\Vertical}(\vx, \vy)
              \leftrightarrow \Diagonal(\vx, \vy) \:\wedge\; \forall \vy \,.\,
              \pi_{\Vertical\Horizontal}(\vx, \vy) \leftrightarrow
              \Diagonal(\vx, \vy).
        \end{aligned}
        \]
      \item
        \textbf{[\S\,\Z\,\A\,\E]}
        This proof 
        is inspired by the one used for the undecidability of the guarded fragment
        extended with transitive closure~\cite{Gra99a}.
        This time, the functionality of the diagonal relation $\Diagonal$ is
        indirectly ensured by the conjunction of the four formulas
        $\gamma_{1}(\vx)$, $\gamma_{2}(\vx)$, $\gamma_{3}(\vx)$, and
        $\gamma_{4}(\vx)$ that exploit all the features of
        the fragment:
        \[
        \begin{aligned}
          \gamma(\vx) \grameq
            & \; \gamma_{1}(\vx) \wedge \gamma_{2}(\vx) \wedge \gamma_{3}(\vx)
              \wedge \gamma_{4}(\vx) \wedge \forall \vy \,.\,
              \pi_{\Diagonal}(\vx, \vy) \leftrightarrow \Diagonal(\vx, \vy),
         \end{aligned}
        \]    
        where
        \[
        \begin{aligned}
          \gamma_{1}(\vx) \grameq
          & \;\forall \vy \,.\, \left( \bigvee_{i \in \{ 0, 1 \}}
            \DiagonalX{i}(\vx, \vy) \right) \leftrightarrow \Diagonal(\vx, \vy),
            \\
          \gamma_{2}(\vx) \grameq
          & \left( \bigvee_{i \in \{ 0, 1 \}} \neg \exists \vy .\,
            \DiagonalX{i}(\vx, \vy) \right) \!\wedge\! \left( \bigwedge_{i \in \{ 0,
            1 \}} \forall \vy .\, \DiagonalX{i}(\vx, \vy) \rightarrow \exists
            \vz .\, \DiagonalX{1 - i}(\vy, \vz) \right)\!\!, \\
          \gamma_{3}(\vx) \grameq
          & \bigwedge_{i \in \{ 0, 1 \}} \forall \vy \,.\, \left( \vx = \vy \vee
            \DiagonalX{i}(\vx, \vy) \vee \DiagonalX{i}^{-}(\vx, \vy) \right)
            \leftrightarrow \Closure{i}(\vx, \vy), \text{ and} \\
          \gamma_{4}(\vx) \grameq
          & \bigwedge_{i \in \{ 0, 1 \}} \forall \vy \,.\, (\exists \vz .
            (\Closure{i}(\vx, \vz) \wedge \Closure{i}(\vz, \vy)))
            \leftrightarrow \Closure{i}(\vx, \vy).
        \end{aligned}
        \]
        Intuitively, $\gamma_{1}$ asserts that $\Diagonal$ is the union of the
        two accessory relations $\DiagonalX{0}$ and $\DiagonalX{1}$, while
        $\gamma_{2}$ guarantees that a point can only have adjacents \wrt just
        one relation $\DiagonalX{i}$ and that these adjacents can only appear as
        first argument of the opposite relation $\DiagonalX{1 - i}$.
        In addition, $\gamma_{3}$ ensures that the additional relation
        $\Closure{i}$ is the reflexive symmetric closure of $\DiagonalX{i}$ and
        $\gamma_{4}$ forces $\Closure{i}$ to be transitive as well. 
        
        We can now prove that the relation $\Diagonal$ is functional.
        Suppose by contradiction that this is not case, \ie, there exist values
        $a$, $b$, and $c$ in the domain of the model of the sentence $\varphi$,
        with $b \neq c$ such that both $\Diagonal(a, b)$ and $\Diagonal(a, c)$
        hold true.
        By the formula $\gamma_{1}$ and the first conjunct of $\gamma_{2}$, we
        have that $\DiagonalX{i}(a, b)$ and $\DiagonalX{i}(a, c)$ hold for
        exactly one index $i \in \{ 0, 1 \}$.
        Thanks to the full $\gamma_{2}$, we surely know that $a \neq b$, $a \neq
        c$, and neither $\DiagonalX{i}(b, c)$ nor $\DiagonalX{i}(c, b)$ can
        hold.
        Indeed, if $a = b$ then $\DiagonalX{i}(a, a)$. This in turn implies
        $\DiagonalX{1 - i}(a, d)$ for some value $d$ due to the second
        conjunct of $\gamma_{2}$.
        Hence, there would be pairs with the same first element in both
        relations, trivially violating the first conjunct of $\gamma_{2}$.
        Similarly, if $\DiagonalX{i}(b, c)$ holds, then $\DiagonalX{1 - i}(c,
        d)$ needs to hold as well, for some value $d$, leading again to a
        contradiction.
        Now, by the formula $\gamma_{3}$, both $\Closure{i}(b, a)$ and
        $\Closure{i}(a, c)$ hold, but $\Closure{i}(b, c)$ does not.
        However, this clearly contradicts $\gamma_{4}$.
        As a consequence, $\Diagonal$ is necessarily functional.
    \end{itemize}
    Now, it is not hard to see that the above sentence $\varphi$ (one for each
    fragment) is satisfiable if and only if the domino instance on which the
    reduction is based on is solvable.
  \end{proof}

\section{Conclusion} 

In this paper we define and study the decision problems of satisfiability and containment for SHACL documents and shape constraints. In order to do so, we introduce a complete translation between SHACL and \SCL, a fragment of FOL extended with counting quantifiers and a transitive closure operator. 
Using these translations we lay out a map of SHACL fragments for which we are able to prove undecidability or decidability along with complexity results, for the satisfiability and containment problems.  
We also expose semantic properties and asymmetries within  SHACL  which might inform a future update of the specification. The satisfiability and containment problems are undecidable for the full SHACL specification. However, decidability can be achieved by restricting the usage of certain SHACL components, such as cardinality restrictions over property shapes or property paths.
Nevertheless, the decidability of some fragments of SHACL remains an open question, worthy of further investigation.  

\bibliographystyle{splncs04}
\bibliography{litbib}

\begin{thebibliography}{10}
\providecommand{\url}[1]{\texttt{#1}}
\providecommand{\urlprefix}{URL }
\providecommand{\doi}[1]{https://doi.org/#1}

\bibitem{ABBV16}
{A.~Amarilli and M.~Benedikt and P.~Bourhis and M.~Vanden Boom}: {Query
  Answering with Transitive and Linear-Ordered Data.} In: IJCAI'16. pp.
  893--899 (2016)

\bibitem{ABM19}
Acar, E., Benerecetti, M., Mogavero, F.: {Satisfiability in Strategy Logic can
  be Easier than Model Checking.} In: AAAI'19. pp. 2638--2645 (2019)

\bibitem{SHACLstableModelSemantics}
Andresel, M., Corman, J., Ortiz, M., Reutter, J.L., Savkovic, O., Simkus, M.:
  {Stable Model Semantics for Recursive SHACL}. In: Proceedings of The Web
  Conference 2020. p. 1570–1580. WWW ’20 (2020)

\bibitem{BCMNP03}
Baader, F., Calvanese, D., McGuinness, D., Nardim, D., Patel-Scheider, P.: {The
  Description Logic Handbook: Theory, Implementation, and Applications.}
  Cambridge University Press (2003)

\bibitem{Ber66}
Berger, R.: {The Undecidability of the Domino Problem.} MAMS  \textbf{66},
  1--72 (1966)

\bibitem{CS11}
ten Cate, B., Segoufin, L.: {Unary Negation.} In: STACS'11. pp. 344--355.
  LIPIcs 9, Leibniz-Zentrum fuer Informatik (2011)

\bibitem{CS13}
ten Cate, B., Segoufin, L.: {Unary Negation.} LMCS  \textbf{9}(3),  1--46
  (2013)

\bibitem{SHACL2SPARQLtranslation}
Corman, J., Florenzano, F., Reutter, J.L., Savkovi{\'{c}}, O.: {Validating
  SHACL Constraints over a Sparql Endpoint}. In: The Semantic Web -- ISWC 2019.
  pp. 145--163 (2019)

\bibitem{Corman2018SHACL}
Corman, J., Reutter, J.L., Savkovi{\'{c}}, O.: {Semantics and Validation of
  Recursive SHACL}. In: The Semantic Web -- ISWC 2018. pp. 318--336 (2018)

\bibitem{Hayes2014GeneralisedRDF}
Cyganiak, R., Wood, D., Markus~Lanthaler, G.: {RDF 1.1 Concepts and Abstract
  Syntax}. {W3C} {R}ecommendation, {W3C} (2014),
  \url{http://www.w3.org/TR/2014/REC-rdf11-concepts-20140225/}

\bibitem{DK19}
Danielski, D., Kieronski, E.: {Finite Satisfiability of Unary Negation Fragment
  with Transitivity.} In: MFCS'19. pp. 17:1--15. LIPIcs 138, Leibniz-Zentrum
  fuer Informatik (2019)

\bibitem{DM00}
Donini, F., Massacci, F.: {ExpTime Tableaux for $\mathcal{ALC}$.} AI
  \textbf{124}(1),  87--138 (2000)

\bibitem{Gra99a}
Gr{\"a}del, E.: {On The Restraining Power of Guards.} JSL  \textbf{64}(4),
  1719--1742 (1999)

\bibitem{GKV97}
Gr{\"a}del, E., Kolaitis, P., Vardi, M.: {On the Decision Problem for
  Two-Variable First-Order Logic.} BSL  \textbf{3}(1),  53--69 (1997)

\bibitem{GOR97}
Gr{\"a}del, E., Otto, M., Rosen, E.: {Two-Variable Logic with Counting is
  Decidable.} In: LICS'97. pp. 306--317. IEEECS (1997)

\bibitem{JLMS18}
Jung, J., Lutz, C., Martel, M., Schneider, T.: {Querying the Unary Negation
  Fragment with Regular Path Expressions.} In: ICDT'18. pp. 15:1--18.
  OpenProceedings.org (2018)

\bibitem{2017SHACL}
Knublauch, H., Kontokostas, D.: {Shapes constraint language (SHACL)}. {W3C}
  {R}ecommendation, {W3C} (2017), \url{https://www.w3.org/TR/shacl/}

\bibitem{martin2020shapecontainment}
Leinberger, M., Seifer, P., Rienstra, T., L{\"a}mmel, R., Staab, S.: {Deciding
  SHACL Shape Containment through Description Logics Reasoning}. In: The
  Semantic Web -- ISWC 2020. Springer International Publishing (2020)

\bibitem{Lut04}
Lutz, C.: {An Improved {NExpTime}-Hardness Result for Description Logic
  $\mathcal{ALC}$ Extended with Inverse Roles, Nominals, and Counting}. Tech.
  Rep. 05-05, Dresden University of Technology, Dresden, Germany (2005)

\bibitem{Mor75}
Mortimer, M.: {On Languages with Two Variables.} MLQ  \textbf{21}(1),  135--140
  (1975)

\bibitem{pareti2019c}
Pareti, P., Konstantinidis, G., Norman, T.J., \c{S}ensoy, M.: {SHACL
  Constraints with Inference Rules}. In: The Semantic Web -- ISWC 2019 (2019)

\bibitem{Pra05}
Pratt{-}Hartmann, I.: {Complexity of the Two-Variable Fragment with Counting
  Quantifiers.} JLLI  \textbf{14}(3),  369--395 (2005)

\bibitem{Pra10}
Pratt{-}Hartmann, I.: {The Two-Variable Fragment with Counting Revisited.} In:
  WOLLIC'10. pp. 42--54. LNCS 6188 (2010)

\bibitem{Rob71}
Robinson, R.: {Undecidability and Nonperiodicity for Tilings of the Plane.} IM
  \textbf{12},  177--209 (1971)

\bibitem{SV01}
Sattler, U., Vardi, M.: {The Hybrid $\mu$-Calculus.} In: IJCAR'01. pp. 76--91.
  LNCS 2083 (2001)

\bibitem{SCh91}
Schild, K.: {A Correspondence Theory for Terminological Logics: Preliminary
  Report.} In: IJCAI'91. pp. 466--471 (1991)

\bibitem{Tob00}
Tobies, S.: {The Complexity of Reasoning with Cardinality Restrictions and
  Nominals in Expressive Description Logics.} JAIR  \textbf{12},  199--217
  (2000)

\bibitem{Wan61}
Wang, H.: {Proving Theorems by Pattern Recognition II.} BSTJ  \textbf{40},
  1--41 (1961)

\end{thebibliography}

\appendix
\section{Translation from SHACL into \SCL\ Grammar}

We present our translation $\tau(\vShapeDocument)$ from a SHACL document \vShapeDocument\ (a set of SHACL shape definitions) into our \SCL\ grammar. 
The translation into \SCL\ grammar of a document \vShapeDocument\ containing a set $\vShapeDocument^{S}$ of shapes can be defined as: $\bigwedge_{\vs \in \vShapeDocument^{S}} \tau(\vs)$, where $\tau(\vs)$ is the translation of a single SHACL shape $\vs$. 
Notice that $\vShapeDocument$ contains an element for each shape name  occurring in $\vShapeDocument$. If $\vShapeDocument$ contains a shape name $\vs$ that does not have a corresponding shape definition, $\vShapeDocument^{S}$ will include the empty shape definition $\vs\pair{\{\}}{\{\}}$. 
Given a shape $\vs\pair{\vshapet}{\vshapec}$, its translation  $\tau(\vs\pair{\vshapet}{\vshapec})$ is defined in Table 1, where its constraint definition $\taus{\vshapec}{\vx}$ equals $\tau(\vx, \vshape)$. Note that we do not discuss implicit class-based targets, as they just represent a syntactic variant of class targets. In the reminder of this section we define how to compute $\tau(\vx, \vshape)$. 


As convention, we use \conc\ as an arbitrary constant and \constantList\ as an arbitrary list of constants. 
We use \vshape ,  \vshape$^{\prime}$\ and \vshape$^{''}$ as shape names, and \vShapes\  as a list of shape names. Variables are defined as $\vx$, $\vy$ and $\vz$. Arbitrary paths are identified with $\tr$.

The translation of the constraints of a shape $\tau(\vx, \vshape)$ is defined in two cases as follows. The first case deals with the property shapes, which must have exactly one value for the \sh{path} property. The second case deals with node shapes, which cannot have any value for the \sh{path} property.

\[
\tau(\vx, \vshape) = \top \wedge \begin{cases}
      \bigwedge_{\forall \trip{\vshape}{\vy}{\vz} \in \vShapeDocument} \tau_2(\vx, \tr, \trip{\vshape}{\vy}{\vz}) & \text{if} \; \exists r .    \trip{\vshape}{ \sh{path}}{ \tr}) \in \vShapeDocument\\
      \bigwedge_{\forall \trip{\vshape}{\vy}{\vz} \in \vShapeDocument} \tau_1(\vx, \trip{\vshape}{\vy}{\vz})& \text{otherwise}
    \end{cases}  
\]

This translation is based on the following translations of node shapes triples, property shape triples and property paths. 

\subsection{Translation of Node Shape Triples}

The translation of $\tau_1(\vx, \trip{\vshape}{\vy}{\vz})$ is split in the following cases, depending on the predicate of the triple. In case none of those cases are matched $\tau_1(\vx, \trip{\vshape}{\vy}{\vz}) \transeq \top$. The latter ensures that any triple not directly described in the cases below does not alter the truth value of the conjunction in the definition of $\tau(\vx, \vshape)$.

\begin{itemize}
    \item
        $\tau_1(\vx, \trip{\vshape}{ \sh{hasValue}}{ \conc}) \transeq 
        \vx = \conc $ .  
    \item
        $\tau_1(\vx, \trip{\vshape}{ \sh{in}}{\constantList}) \transeq \bigvee_{\conc \in \constantList}
        \vx = \conc $ .  
    \item
        $\tau_1(\vx, \trip{\vshape}{ \sh{class}}{ \conc}) \transeq 
        \exists \vy . \isA(\vx, \vy) \wedge \vy = \conc $ . 
    \item
        $\tau_1(\vx,\trip{\vshape}{ \sh{datatype}}{ \conc})) \transeq 
        \text{F}^{\, \hasdatatype = \conc}(\vx) $ . 
    \item
        $\tau_1(\vx, \trip{\vshape}{ \sh{nodeKind}}{ \conc}) \transeq 
        \text{F}^{\, \isIRI}(\vx) $ if $c = $\sh{IRI};
        $\text{F}^{\, \isLiteral}(\vx) $ if \\$c = $\sh{Literal};
        $\text{F}^{\, \isBlank}(\vx) $ if $c = $\sh{BlankNode}. The translations for a $\conc$ that equals \sh{BlankNodeOrIRI}, \sh{BlankNodeOrLiteral} or \sh{IRIOrLiteral} are trivially constructed by a conjunction of two of these three filters.
    \item
        $\tau_1(\vx, \trip{\vshape}{ \sh{minExclusive}}{ \conc}) \transeq \vx > \conc
        $ if order is an interpreted relation, else $\text{F}^{\, > \conc}(\vx)$. 
    \item
        $\tau_1(\vx, \trip{\vshape}{ \sh{minInclusive}}{ \conc}) \transeq \vx \geq \conc
         $ if order is an interpreted relation, else $\text{F}^{\, \geq \conc}(\vx)$. 
    \item
        $\tau_1(\vx, \trip{\vshape}{ \sh{maxExclusive}}{ \conc}) \transeq  \vx < \conc
         $ if order is an interpreted relation, else $\text{F}^{\, < \conc}(\vx)$.  
    \item
        $\tau_1(\vx, \trip{\vshape}{ \sh{maxInclusive}}{ \conc}) \transeq \vx \leq \conc
         $ if order is an interpreted relation, else $\text{F}^{\, \leq \conc}(\vx)$.  
    \item
        $\tau_1(\vx, \trip{\vshape}{ \sh{maxLength}}{ \conc}) \transeq 
        \text{F}^{\, \text{maxLength}=\conc}(\vx) $ . 
    \item
        $\tau_1(\vx, \trip{\vshape}{ \sh{minLength}}{ \conc}) \transeq 
        \text{F}^{\, \text{minLength}=\conc}(\vx) $ . 
    \item
        $\tau_1(\vx, \trip{\vshape}{ \sh{pattern}}{ \conc}) \transeq 
        \text{F}^{\, \text{pattern}=\conc}(\vx) $ . 
    \item
        $\tau_1(\vx, \trip{\vshape}{ \sh{languageIn}}{ \constantList}) \transeq \bigvee_{\conc \in \constantList} F^{\text{languageTag = \conc}}(\vx)$ . 
    \item
        $\tau_1(\vx, \trip{\vshape}{ \sh{not}}{ \vshape^{\prime}}) \transeq \neg \hasshape{\vx}{\vshape^{\prime}}$ .
    \item
        $\tau_1(\vx, \trip{\vshape}{ \sh{and}}{ \vShapes}) \transeq 
        \bigwedge_{\vshape^{\prime} \in \vShapes} \hasshape{\vx}{\vshape^{\prime}}$ . 
    \item
        $\tau_1(\vx, \trip{\vshape}{ \sh{or}}{ \vShapes}) \transeq 
        \bigvee_{\vshape^{\prime} \in \vShapes} \hasshape{\vx}{\vshape^{\prime}}$ . 
    \item
        $\tau_1(\vx, \trip{\vshape}{ \sh{xone}}{ \vShapes}) \transeq 
        \bigvee_{\vshape^{\prime} \in \vShapes} ( \hasshape{\vx}{\vshape^{\prime}}  \wedge  \bigwedge_{\vshape^{''} \in \vShapes \setminus \{\vshape^{\prime}\}} \\ \neg \hasshape{\vx}{\vshape^{''}})$ . 
    \item
        $\tau_1(\vx, \trip{\vshape}{ \sh{node}}{ \vshape^{\prime}}) \transeq 
        \hasshape{\vx}{\vshape^{\prime}}$ . 
    \item
        $\tau_1(\vx, \trip{\vshape}{ \sh{property}}{ \vshape^{\prime}}) \transeq 
        \hasshape{\vx}{\vshape^{\prime}}$ . 
\end{itemize}

\subsection{Translation of Property Shapes}

The translation of $\tau_2(\vx, \tr, \trip{\vshape}{\vy}{\vz})$ is split in the following cases, depending on the predicate of the triple. In case none of those cases are matched $\tau_2(\vx, \tr, \trip{\vshape}{\vy}{\vz}) \transeq \top$.

\begin{itemize}
    \item
         $\tau_2(\vx, \tr, \trip{\vshape}{ \sh{hasValue}}{ \conc}) \transeq 
         \exists \vy . r(\vx, \vy) \wedge \tau_1(\vy, \trip{\vshape}{ \sh{hasValue}}{ \conc}) $
    \item
        $\tau_2(\vx, \tr, \trip{\vshape}{ p}{ \conc}) \transeq 
        \forall \vy . \tau_3(\vx, \tr,\vy)) \rightarrow \tau_1(\vy, \trip{\vshape}{ p}{ \conc}) $, if $p$ equal to one of the following: \sh{class}, \sh{datatype}, \sh{nodeKind}, \sh{minExclusive}, \\ \sh{minInclusive}, \sh{maxExclusive}, \sh{maxInclusive}, \sh{maxLength}, \\ \sh{minLength}, \sh{pattern}, \sh{not}, \sh{and}, \sh{or}, \sh{xone},  \sh{node}, \\ \sh{property}, \sh{in} .
    \item
        $\tau_2(\vx, \tr, \trip{\vshape}{ \sh{languageIn}}{ \constantList}) \transeq 
        \forall \vy . \tau_3(\vx, \tr,\vy)) \rightarrow \\ \phantom{xxx} \tau_1(\vy, \trip{\vshape}{ \sh{languageIn}}{ \constantList}) $ .
    \item
        $\tau_2(\vx, \tr, \trip{\vshape}{ \sh{uniqueLang}}{ \text{true}}) \transeq \bigwedge_{\conc \in L} \neg \exists^{\geq 2} \vy . r(\vx, \vy) \wedge F^{\text{lang}=c}(\vy) $  where $L = \{\conc | \conc \in \constantList \wedge \exists \vshape^{'} .  \trip{\vshape^{'}}{\sh{languageIn}}{\constantList} \in \vShapeDocument\}$. This translation is possible because \sh{languageIn} is the only constraint that can force language tags constraints on literals.

        
    \item
        $\tau_2(\vx, \tr, \trip{\vshape}{ \sh{minCount}}{\conc}) \transeq 
        \exists^{\geq \conc} \vy . \tau_3(\vx, \tr,\vy)) $ .
    \item
        $\tau_2(\vx, \tr, \trip{\vshape}{ \sh{maxCount}}{ \conc}) \transeq \neg
        \exists^{\leq \conc} \vy . \tau_3(\vx, \tr,\vy)) $ .
    \item
        $\tau_2(\vx, \tr, \trip{\vshape}{ \sh{equals}}{ \conc}) \transeq \forall \vy .  \tau_3(\vx, \tr,\vy) \leftrightarrow \tau_3(\vx, \conc,\vy)$ .
    \item
        $\tau_2(\vx, \tr, \trip{\vshape}{ \sh{disjoint}}{ \conc}) \transeq \neg \exists \vy .  \tau_3(\vx, \tr,\vy) \wedge \tau_3(\vx, \conc,\vy)$ .
    \item
        $\tau_2(\vx, \tr, \trip{\vshape}{ \sh{lessThan}}{ \conc}) \transeq 
        \forall \vy, \vz \,.\, \tau_3(\vx, \tr,\vy) \wedge \tau_3(\vx, \conc,\vz) \rightarrow \vy < \vz $ .
    \item
        $\tau_2(\vx, \tr, \trip{\vshape}{ \sh{lessThanOrEquals}}{ \conc}) \transeq 
        \forall \vy, \vz \,.\, \tau_3(\vx, \tr,\vy) \wedge \tau_3(\vx, \conc,\vz) \rightarrow \\ \phantom{xxx} \vy \leq \vz $ .
    \item
        $\tau_2(\vx, \tr, \trip{\vshape}{ \sh{qualifiedValueShape}}{ \vshape^{\prime}}) \transeq \alpha \wedge \beta
        $ , where $\alpha$ and $\beta$ are defined as follows.        
        Let $S^{'}$ be the set of \emph{sibling shapes} of $\vshape$ if $\vShapeDocument$ contains \\ $\trip{\vshape}{  \sh{qualifiedValueShapesDisjoint}}{ \text{true}}$, or the empty set otherwise. Let $\nu(\vx) = \hasshape{\vx}{\vshape^{\prime}} \bigwedge_{\forall \vshape^{''} \in S^{'}} \neg \hasshape{\vx}{\vshape^{''}}$. If $\vShapeDocument$ contains the triple \\ $\trip{\vshape}{ \sh{qualifiedMinCount}}{ \conc}$, then $\alpha$ is equal to $\exists^{\geq \conc} \vy . \tau_3(\vx, \tr,\vy) \wedge \nu(\vx)$, otherwise $\alpha$ is equal to $\top$. If $\vShapeDocument$ contains the triple \\ $\trip{\vshape}{ \sh{qualifiedMaxCount}}{ \conc}$, then $\beta$ is equal to $\neg \exists^{\leq \conc} \vy . \tau_3(\vx, \tr,\vy) \wedge \nu(\vx)$, otherwise $\beta$ is equal to $\top$.
    \item
        $\tau_2(\vx, \tr, \trip{\vshape}{ \sh{close}}{ \text{true}}) \transeq \bigwedge_{\forall R \in \Theta} \neg \exists_{} \vy . R(\vx, \vy)$ if $\Theta$ is not empty, where $\Theta$ is defined as follows. Let $\Theta^{\text{all}}$ be the set of all relation names in $\vShapeDocument$, namely $\Theta^{\text{all}} = \{R | \trip{\vx}{R}{\vy} \in \vShapeDocument \}$. If this FOL translation is used to compare multiple SHACL documents, such in the case of deciding containment, then $\Theta^{\text{all}}$ must be extended to contain all the relation names in all these SHACL documents. Let $\Theta^{\text{declared}}$ be the set of all the binary property names $\Theta^{\text{declared}} = \{R | \{\trip{\vshape}{ \sh{property}}{ \vx} \wedge \trip{\vx}{\sh{path}}{R)}\} \subseteq \vShapeDocument \}$. Let $\Theta^{\text{ignored}}$ be the set of all the binary property names declared as ``ignored'' properties, namely $\Theta^{\text{ignored}} = \{R | R \in \bar{R} \wedge \allowbreak \trip{\vshape}{  \sh{ignoredProperties}}{ \bar{R}} \in \vShapeDocument\}$, where $\bar{R}$ is a list of IRIs. The set $\Theta$ can now be defined as $\Theta = \Theta^{\text{all}} \setminus (\Theta^{\text{declared}} \cup \Theta^{\text{ignored}})$. 
        
\end{itemize}

\subsection{Translation of Property Paths}

The translation $\tau_3(\vx, \tr,\vy))$ of any SHACL path $\tr$ is given by the following cases. For simplicity, we will assume that all property paths have been translated into an equivalent form having only simple IRIs within the scope of the inverse operator. Using SPARQL syntax for brevity, where the inverse operator is identified by the hat symbol $\hat \ $, the sequence path $\hat \ (r_1 / r_2)$ can be simplified into $\hat \ r_2 / \hat r_1$; an alternate path $\hat \ (r_1 \mid r_2)$ can be simplified into $\hat \ r_2 \mid \hat \ r_1$. We can simplify in a similar way zero-or-more, one-or-more and zero-or-one paths $\hat \ (r^{* / + / ?})$ into $(\hat \ r)^{* / + / ?}$.

\begin{itemize}
    \item
         If $\tr$ is an IRI $R$, then $\tau_3(\vx, \tr,\vy)) \transeq R(\vx, \vy)$
    \item
        If $\tr$ is an inverse path, with $\tr = $ ``\texttt{[ sh:inversePath $R$ ]}'', then $\tau_3(\vx, \tr,\vy)) \transeq 
        R^{-}(\vx, \vy) $  
    \item
        If $\tr$ is a conjunction of paths, with $\tr = $ ``\texttt{( $\tr_1$, $\tr_2$, ..., $\tr_n$ )}'', then $\tau_3(\vx, \tr,\vy)) \transeq 
        \exists \vz_1, \vz_2, ..., \vz_{n-1} . \tau_3(\vx, \tr_1,\vz_1)) \wedge \tau_3(\vz_1, \tr_2,\vz_2)) \wedge ... \wedge \tau_3(\vz_{n-1}, \tr_2,\vy)) $
    \item
        If $\tr$ is a disjunction of paths, with $\tr = $ ``\texttt{[ sh:alternativePath ( $\tr_1$, $\tr_2$, ..., $\tr_n$ ) ]}'', then $\tau_3(\vx, \tr,\vy)) \transeq 
        \tau_3(\vx, \tr_1,\vy)) \vee \tau_3(\vx, \tr_2,\vy)) \vee ... \vee \tau_3(\vx, \tr_n,\vy)) $   
    \item
        If $\tr$ is a zero-or-more path, with $\tr = $ ``\texttt{[ \sh{zeroOrMorePath} $\tr_1$]}'', then $\tau_3(\vx, \tr,\vy)) \\ \transeq 
        (\tau_3(\vx, \tr_1,\vy)))^{*}$
    \item
        If $\tr$ is a one-or-more path, with $\tr = $ ``\texttt{[ \sh{oneOrMorePath} $\tr_1$]}'', then $\tau_3(\vx, \tr,\vy)) \transeq \exists \vz . \tau_3(\vx, \tr_1,\vz)) \wedge (\tau_3(\vz, \tr_1,\vy)))^{*}$ 
    \item
        If $\tr$ is a zero-or-one path, with $\tr = $ ``\texttt{[ \sh{zeroOrOnePath} $\tr_1$]}'', then $\tau_3(\vx, \tr,\vy)) \transeq \vx = \vy \vee \tau_3(\vx, \tr_1,\vy))$  
\end{itemize}

\section{Translation from \SCL\ Grammar into SHACL}

We present here the translation $\mu$, inverse of $\tau$, to translate a sentence in the \SCL\ grammar into a SHACL document. We begin by defining the translation of the property path subgrammar $r(\vx, \vy)$ into SHACL property paths:
\begin{itemize}
    \item $\mu(R) \transeq R$
    
    \item $\mu(R^{-}) \transeq $ \texttt{ [ \sh{inversePath} $R$ ] }
    
    \item $\mu(r^{\star}(\vx, \vy)) \transeq $ \\ \texttt{ [ \sh{zeroOrMorePath} $\mu(r(\vx, \vy))$ ] } 
    
    \item $\mu(\vx = \vy \vee r(\vx, \vy)) \transeq $ \\ \texttt{ [ \sh{zeroOrOnePath} $\mu(r(\vx, \vy))$ ] } %
    
    \item $\mu(r_1(\vx, \vy) \vee r_2(\vx, \vy)) \transeq $ \\ \texttt{ [ \sh{alternativePath} ( $\mu(r_1(\vx, \vy))$, $\mu(r_2(\vx, \vy))$ ] ) } 
    
    \item $\mu(r_1(\vx, \vy) \wedge r_2(\vx, \vy)) \transeq $ \\ \texttt{ ( $\mu(r_1(\vx, \vy))$, $\mu(r_2(\vx, \vy))$ ) }
\end{itemize}

The translation of the constraint  subgrammar $\psi(\vx)$ is the following. we will use $\mu(\psi(\vx))$ to denote the SHACL translation of shape $\psi(\vx)$, and $\iota(\mu(\psi(\vx)))$ to denote its shape IRI. To improve legibility, we omit set brackets around sets of RDF triples, and we represent them in Turtle syntax. For example, a set of RDF triples such as ``\texttt{$s$ a \sh{NodeShape} ; \sh{hasValue} \conc\ .} '' is to be interpreted as the set $\{\trip{s}{\rdf{type}}{\sh{NodeShape}}, \trip{s}{\sh{hasValue}}{c}\}$.


\begin{itemize}
    \item $\mu(\top) \transeq  $ \\ \texttt{$s$ a \sh{NodeShape} . } 
    
    \item $\mu(\vx = \conc) \transeq  $ \\ \texttt{$s$ a \sh{NodeShape} ; \\
    \phantom{x} \sh{hasValue} \conc\ . 
    } 
    
    \item $\mu(F(\vx)) \transeq $ \\ \texttt{$s$ a \sh{NodeShape} ; \\
    \phantom{x} $f$ \conc\ . 
    } \\ Predicate $f$ is the filter function identified by $F$, namely one of the following: \sh{datatype}, \sh{nodeKind}, \sh{minExclusive}, \sh{minInclusive}, \\ \sh{maxExclusive}, \sh{maxInclusive}, \sh{maxLength}, \sh{minLength}, \sh{pattern}, \sh{languageIn}.
    
    \item $\mu(\hasshape{\vx}{\vshape^{\prime}}) \transeq  $ \\ \texttt{$s$ a \sh{NodeShape} ; \\
    \phantom{x} \sh{node} $\vshape^{\prime}$ . 
    } \\ if  $\vshape^{\prime}$ is a node shape, else:\\
    \texttt{$s$ a \sh{NodeShape} ; \\
    \phantom{x} \sh{property} $\vshape^{\prime}$ . 
    }
    
    \item $\mu(\neg \psi(\vx)) \transeq  $ \\ \texttt{$s$ a \sh{NodeShape} ; \\
    \phantom{x} \sh{not} $\iota(\mu(\psi(\vx)))$ . } 
    
    \item $\mu(\psi_1(\vx) \wedge \psi_2(\vx)) \transeq  $ \\ \texttt{$s$ a \sh{NodeShape} ; \\
    \phantom{x} \sh{and} ($\iota(\mu(\psi_1(\vx)))$, $\iota(\mu(\psi_2(\vx)))$) . }
    
    \item $\mu(\exists^{\geq n} \vy . r(\vx, \vy) \wedge \psi(\vx)) \transeq  $ \\ \texttt{$s$ a \sh{NodeShape} ; \\
    \phantom{x} \sh{property} [ \\
    \phantom{x} \phantom{x} \sh{path} $\mu(r(\vx, \vy))$ ; \\ 
    \phantom{x} \phantom{x} \sh{qualifiedValueShape} $\iota(\mu(\psi(\vx)))$ ; \\
    \phantom{x} \phantom{x} \sh{qualifiedMinCount} $n$ ;     \\
    \phantom{x} ] .   
    }
    
    \item $\mu( \forall \vy . r(\vx, \vy) \leftrightarrow R(\vx, \vy) ) \transeq  $ \\ \texttt{$s$ a \sh{NodeShape} ; \\
    \phantom{x} \sh{property} [ ; \\
    \phantom{x} \phantom{x} \sh{path} $\mu(r(\vx, \vy))$ ; \\ 
    \phantom{x} \phantom{x} \sh{equals} $R$ ;     \\
    \phantom{x} ] .   
    }
    
    \item $\mu( \neg \exists \vy . r(\vx, \vy) \wedge R(\vx, \vy) ) \transeq  $ \\ \texttt{$s$ a \sh{NodeShape} ; \\
    \phantom{x} \sh{property} [ ; \\
    \phantom{x} \phantom{x} \sh{path} $\mu(r(\vx, \vy))$ ; \\ 
    \phantom{x} \phantom{x} \sh{disjoint} $R$ ;     \\
    \phantom{x} ] .   
    }    
    
    \item $\mu( \forall \vy, \vz . r(\vx, \vy) \wedge R(\vx, \vz) \rightarrow \vy < \vz ) \transeq  $ \\ \texttt{$s$ a \sh{NodeShape} ; \\
    \phantom{x} \sh{property} [ ; \\
    \phantom{x} \phantom{x} \sh{path} $\mu(r(\vx, \vy))$ ; \\ 
    \phantom{x} \phantom{x} \sh{lessThan} $R$ ;     \\
    \phantom{x} ] .   
    }   

    \item $\mu( \forall \vy, \vz . r(\vx, \vy) \wedge R(\vx, \vz) \rightarrow \vy \leq \vz ) \transeq  $ \\ \texttt{$s$ a \sh{NodeShape} ; \\
    \phantom{x} \sh{property} [ ; \\
    \phantom{x} \phantom{x} \sh{path} $\mu(r(\vx, \vy))$ ; \\ 
    \phantom{x} \phantom{x} \sh{lessThanOrEquals} $R$ ;     \\
    \phantom{x} ] .   
    }   
    
\end{itemize}

We can now define the translation $\mu(\varphi)$ of a complete sentence of the $\varphi$-grammar into a SHACL document \vShapeDocument\ (effectively a set of RDF triples) as follows.  

\begin{itemize}
    \item $\mu(\varphi_1 \wedge \varphi_2) \transeq  \mu(\varphi_1) \cup \mu(\varphi_2)) $ 
    
    \item $\mu(\psi_1(\conc)) \transeq  \mu(\psi_1(\vx)) \; \cup $ \\ \texttt{$s$ a \sh{NodeShape} ; \\
    \phantom{x} \sh{targetNode} \conc ; \\
    \phantom{x} \sh{node} $\iota(\mu(\psi_1(\vx)))$ . 
    }
    
    \item $\mu(\forall \vx . \; \isA(\vx, \conc) \rightarrow \psi_1(\vx)) \transeq  \mu(\psi_1(\vx)) \; \cup $ \\ \texttt{$s$ a \sh{NodeShape} ; \\
    \phantom{x} \sh{targetClass} \conc ; \\
    \phantom{x} \sh{node} $\iota(\mu(\psi_1(\vx)))$ .
    } 

    \item $\mu(\forall \vx, \vy . \; R(\vx, \vy) \rightarrow \psi_1(\vx)) \transeq  \mu(\psi_1(\vx)) \; \cup $ \\ \texttt{$s$ a \sh{NodeShape} ; \\
    \phantom{x} \sh{targetSubjectsOf} $R$ ; \\
    \phantom{x} \sh{node} $\iota(\mu(\psi_1(\vx)))$ .
    }     

    \item $\mu(\forall \vx, \vy . \; R^{-}(\vx, \vy) \rightarrow \psi_1(\vx)) \transeq  \mu(\psi_1(\vx)) \; \cup $ \\ \texttt{$s$ a \sh{NodeShape} ; \\
    \phantom{x} \sh{targetObjectsOf} $R$ ; \\
    \phantom{x} \sh{node} $\iota(\mu(\psi_1(\vx)))$ .
    }     
    
    \item $\mu(\forall \vx . \; \hasshape{\vx}{\sconst} \leftrightarrow \psi(\vx)) \transeq  \mu(\psi_1(\vx)) \; \cup $ \\ 
    \texttt{$s$ a \sh{NodeShape} ; \\
    \phantom{x} \sh{node} $\iota(\mu(\psi_1(\vx)))$ .
    }   
    
\end{itemize}

\section{Additional Proof}

\begin{proof}[Proof of Theorem~\ref{thm:fmp(oeop)}]
Similarly to the use of the \C\ construct of \SCL, a simple combination of
few instances of the \lO\ feature allows to write the following sentence
$\varphi$ encoding the existence of an injective function that is not
surjective.
In more detail, a weaker version of the role of the counting quantifier is
played here by the \lO' construct that enforces the functionality of the two
relations $F$ and $G$.
In addition, by applying the \lO\ construct twice between the inverse of $F$
and $G$, we are able to ensures that $F^{-}$ is functional as well.
Hence, the thesis easily follows.
\[
\begin{aligned}
  \varphi \grameq\;
    & \isA(0, \conc) \wedge \neg \exists \vx \,.\, F^{-}(0, \vx) \wedge
      \forall \vx \,.\, \isA(\vx, \conc) \rightarrow \psi(\vx); \\
  \psi(\vx) \grameq\;
    & \exists \vy \,.\, (F(\vx, \vy) \wedge \isA(\vy, \conc)) \\
  \wedge\;
    & \forall \vy, \vz \,.\, F(\vx, \vy) \wedge F(\vx, \vz) \rightarrow
      \vy \leq \vz \wedge \forall \vy, \vz \,.\, G(\vx, \vy) \wedge
      G(\vx, \vz) \rightarrow \vy \leq \vz \\
  \wedge\;
    & \forall \vy, \vz \,.\, F^{-}(\vx, \vy) \wedge G(\vx, \vz) \rightarrow
      \vy \leq \vz \wedge \forall \vy, \vz \,.\, F^{-}(\vx, \vy) \wedge
      G(\vx, \vz) \rightarrow \vz \leq \vy.
\end{aligned}
\]

To prove that \E\,\lO' fragment does not enjoy the finite-model property
too, it is enough to replace the last to applications of the \lO\ feature
with the \E\ formula $\forall \vy \,.\, F^{-}(\vx, \vy) \leftrightarrow
G(\vx, \vy)$, which ensures the functionality of $F^{-}$, being $G$
functional.
\end{proof}

\end{document}